\newtheorem{definition}{Definition}[section]
\newenvironment{proof}{{\bf Proof.\ }}{\QED\\}
\newtheorem{lemma}{Lemma}[section]
\numberwithin{equation}{section}
\newtheorem{theorem}{Theorem}[section]
\newtheorem{cnjctr}{Conjecture}[section]
\newcommand{\QED}{\hspace*{\fill}\rule{2.5mm}{2.5mm}}
\newcommand\qed{\hfill$\sqcap\kern-7.5pt\hbox{$\sqcup$}$}
\newcommand{\beqn}{\begin{equation}}
\newcommand{\eeqn}{\end{equation}}
\newcommand{\bear}{\begin{eqnarray}}
\newcommand{\eear}{\end{eqnarray}}
\newcommand{\bean}{\begin{eqnarray*}}
\newcommand{\eean}{\end{eqnarray*}}
\begin{document}
\title{Spectral Gap Computations for Linearized Boltzmann Operators}
%
\author{Chenglong Zhang* and Irene M. Gamba**\\
*The Institute for Computational Engineering and Sciences (ICES)\\
e-mail: chenglongzhng@gmail.com\\
**Department of Mathematics and ICES\\
e-mail: gamba@math.utexas.edu\\
The University of Texas at Austin, Austin, TX 78712, USA\\ 
}

\date{ }

\maketitle
\begin{abstract} The quantitative information on the spectral gaps for the linearized Boltzmann operator is of primary importance on justifying the Boltzmann model and study of relaxation to equilibrium. This work, for the first time, provides numerical evidences on the existence of spectral gaps and corresponding approximate values. The linearized Boltzmann operator is projected onto a Discontinuous Galerkin mesh, resulting in a ``collision matrix". The original spectral gap problem is then approximated by a constrained minimization problem, with objective function being the Rayleigh quotient of the ``collision matrix" and with constraints being the conservation laws. A conservation correction then applies. We also showed the convergence of the approximate Rayleigh quotient to the real spectral gap for the case of integrable angular cross-sections. Some distributed eigen-solvers and hybrid OpenMP and MPI parallel computing are implemented.
Numerical results on integrable as well as non-integrable angular cross-sections are provided.
\end{abstract}
%
%
{\bf MSC:}{65M60, 65Y05, 45P05, 45C05, 47A75, 82B40}
{\bf Keywords:} {Boltzmann equation, Spectral gap, Cercignani's conjecture, Rayleigh quotient, Discontinuous Galerkin method, Parallel computing}
\maketitle
\section{Introduction}

The Boltzmann equation is of primary importance in rarefied gas dynamics and has been the keystone of kinetic theories.
The classical Boltzmann equation models interactions or collisions through a bilinear collision operator, where the collisional kernel models the intramolecular potentials and angular scattering mechanisms known as the angular cross-section.
These intramolecular potentials model from hard spheres to soft potentials up to Coulombic interactions (important for plasma collisional modeling). The scattering angular function models the anisotropic nature of the interactions. The angular cross sections could be integrable (e.g. Grad cutoff kernels) or non-integrable (e.g. Grad non-cutoff kernels).

The existence of solutions and regularity theory of the Boltzmann equation in the space inhomogeneous setting have been great mathematical challenges and still remain elusive. Nevertheless, it is well understood that these qualitative properties depend on the intermolecular potential and the integrability properties of the angular cross-sections. Indeed, the relaxation to equilibrium has been at the core of kinetic theory ever since the works of Boltzmann. It provides an analytic basis for the second principle of thermodynamics for a statistical physics model of a gas out of equilibrium. The well-known Boltzmann's \textbf{H} theorem \cite{Cercignani_BEnApplications} shows the possible convergence process and equilibrium states.

However, it's not enough to justify the Boltzmann model with only non-constructive arguments. It is crucial to obtain quantitative information on the time scale of the convergence. The question of obtaining explicit decay rates in recent new energy methods \cite{Guo1,Guo2,Guo3,Guo4,StrainGuo1,StrainGuo2,Strain_VMB} also motivates the work on studying spectral gaps and coercivity estimates.
Many authors, for instance \cite{Hilbert, Arkeryd_SG, Carleman_SG, Grad_Asymp, Caflisch_softSG, ChangUhlenbeck, MouhotStrain_SG, BarangerMouhot_SG, BobylevCercignani_SG, GolsePoupaud, Pao}, have made enormous efforts on (non-)constructive estimates for the rate of convergence (we refer to \cite{DMV_CelebratingSG} for a review),
among which \emph{Cercignani's conjecture} \cite{Cercignani_Conjecture} is a great inspiration:

For any $f$ and its associated Maxwellian $\mu$,  there is an entropy-entropy production relation

\begin{equation*}
\mathcal{D}(f) \geq \lambda \left[ \mathcal{H}(f) - \mathcal{H}(\mu)\right]
\end{equation*}
where $\mathcal{H}(f)=\int f\log(f)dv$ is the (opposite) entropy; $ \mathcal{D}(f)=-\frac{d}{dt}\mathcal{H}(f)$ is the dissipation of the entropy, or ``entropy-production" functional; $\lambda>0$ is some ``suitable constant". This is actually claiming an exponential convergence towards equilibrium.

In the regime very close to equilibrium, the linearized part of the model can actually dominate. The linearized counterpart of Cercignani's conjecture writes

\begin{equation*}
D(F) \geq \lambda \Vert F-\mathbf{P}F\Vert^{2}_{2}
\end{equation*}
where $D(F)=\langle LF,F\rangle$ is the Dirichlet form of the linearized Boltzmann operator $L$, whose definitions will be introduced later; $\mathbf{P}$ is the orthogonal projection in $L^{2}$ onto the null space $\mathcal{N}(L)$. 

The explicit rate $\lambda$ (if exists) will be our goal. It has been shown that \cite{Carlen1,Carlen2}, for Maxwellian potentials, solutions for spatial homogeneous Boltzmann equation converge to equilibrium exponentially if and only if the initial datum has finite moments of order greater than 2, and with additional moments and smoothness assumptions on the initial datum, the convergence rate will be governed by $\lambda$. Similar properties also hold for hard potentials with angular cut-off \cite{Lu_Mouhot}. While for soft potentials, such expliicit rate doesn't exists and thus the solutions for homogeneous Boltzmann equations won't enjoy an exponential decay \cite{Caflisch_softSG}. There are very limited amount of results on the estimates, and we haven't seen any numerical results that provide the ``exact" rate governing the exponential decay to equilibrium. This will be the first attempt on this problem.

This paper is organized as follows. Section \ref{sec:BE} will provide some preliminaries about the Boltzmann equations and linearized Boltzmann operators; Section \ref{sec:spectralgap} defines and explains the existence of spectral gaps for Boltzmann models with various types of intramolecular potentials
with integrable and non-integrable angular cross-sections. Section \ref{sec:NumericalFormulation} introduces a way to discretize the linearized Boltzmann operator based on Discontinuous Galerkin scheme, which results in the approximate Rayleigh quotient. The minimal Rayleigh quotient will be found outside the null space of the linearized operator.
The convergence of such approximate Rayleigh quotients to the real spectral gaps is also studied analytically. Finally, some numerical results are given in Section \ref{sec:NumericalResults}.

\section{The Boltzmann equations}\label{sec:BE}
The full Boltzmann transport equation is an integro-differential transport equation, with the solution a phase probability density distribution $f(x,v,t) \in \Omega_x \times \mathbb{R}^{d_{v}}\times \mathbb{R}^{+}$ (where $\Omega_x \subseteq \mathbb{R}^{d_{x}}$) measuring the likelihood to find molecules at a location $x$ with molecular velocities $v$ at a given time $t$. Here, we are only interested in the spatially homogeneous Boltzmann equation in $d$-dimensional velocity space, which reads
\begin{eqnarray}
\label{fullBTE}
\frac{\partial f(v, t)}{\partial t} &=& Q_{sym}(f,f)(v,t) \\
f(v, 0)  &=&  f_0(v) \, . \nonumber \\
\end{eqnarray}
Here, the right-hand side symmetrized Boltzmann bilinear operator reads
\begin{eqnarray}
\label{sym_boltz}
    Q_{sym}(f,g)(v)=\frac{1}{2}\int_{\mathbb{R}^{d}}\int_{\mathbb{S}^{d-1}}(f'g'_{*}+f'_{*}g'-fg_{*}-f_{*}g)B(|u|,\sigma)d\sigma dv_{*} \, ,
\end{eqnarray}
where $\mathbb{S}^{d-1}$ is the $(d-1)$-dimensional sphere. Here and in the following, for simplicity, denote $f'=f(v')$, $f'_{*}=f(v'_{*})$ and $f_{*}=f(v_{*})$, with $v'$ and $v'_{*}$ being post-collisional velocities. We will drop the subscript on $Q_{sym}$ and simply write $Q(f,f)$ when $f=g$.
The integration is parameterized in terms of the center of mass and relative velocity.
And on the $d - 1$ dimensional sphere, integration is done with respect to the unit direction given by the elastic post collisional relative velocity.

The elastic law for pre- and post-collisional velocities obeys
\begin{equation}
\label{velocity}
u=v-v_{*}, \ \ \ \ v'= v + \frac{1}{2}(|u|\sigma - u), \  \ \ \  v'_{*} = v_{*} - \frac{1}{2}(|u|\sigma - u) \, .
\end{equation}
Under certain physical backgrounds, the \emph{collision kernel} is assumed to have a product form
\begin{equation}
\label{coll_kernel}
B(|u|, \sigma)=|u|^{\gamma}b(\cos(\theta)), \qquad \gamma \in (-d,1]\, ,
\end{equation}
with \emph{angular cross-sections}
\begin{equation}
\label{angular_cross}
\cos(\theta)=\frac{u\cdot \sigma}{|u|}\, , \quad b(\cos(\theta))\sim \sin^{-(d-1)-\alpha}(\frac{\theta}{2}) \text{ as } \theta\sim 0  \, , \quad \alpha\in(-\infty,2) \, .
\end{equation}
Without loss of generality, we can assume
\begin{equation}
 b(\cos(\theta))=\frac{1}{2^{d-1}\pi}\sin^{-(d-1)-\alpha}(\frac{\theta}{2}) \, .
\end{equation}


The regularity parameters $\gamma$ and $\alpha$ actually correspond to different types of interactions and different power-law molecular potentials.
For interaction potentials obeying spherical repulsive laws
\begin{equation*}
\phi(r)=r^{-(s-1)}, \quad s\in [2,+\infty)
\end{equation*}
the collision kernel and angular cross-section are explicit for $d=3$, that is, $\gamma=(s-5)/(s-1)$ and $\alpha=2/(s-1)$ (see \cite{Cercignani_BEnApplications}).
As a convention, $-d<\gamma<0$ defines Soft Potentials, $\gamma=0$ is the Maxwell Molecules type interaction, $0<\gamma<1$ describes Variable Hard Potentials and $\gamma=1$ is the classical Hard Sphere model. Also, the angular cross-sections can be of short range or long range, that is, $b(\cos(\theta))$ can be integrable for $\alpha<0$ and non-integrable when $\alpha \geq 0$. In particular, the case $\alpha=2$ and $\gamma=-3$ models the grazing collisions under Coulombian potentials, which deduces to the Fokker-Planck-Landau equation which is a primary model for collisional plasmas.

The weak form for (\ref{sym_boltz}) with $f=g$, or called Maxwell form, after a change of variable $u=v-v_{*}$ is given by
\begin{equation}
\label{maxwell_form}
\int_{\mathbb{R}^{d}} Q(f,f)(v)\phi(v)dv = \int_{_{v,u\in \mathbb{R}^{d}}}\!\! f(v)f(v-u)\int_{_{\sigma \in \mathbb{S}^{d-1}}}\!\!\!\!\![\phi(v')-\phi(v)]B(|u|, \sigma)d\sigma dudv
\end{equation}
which is a \emph{double mixing convolution}. 

In spite of its complicated form, $Q(f,f)$ enjoys many interesting and remarkable properties. Among them, the followings are most fundamental and important \cite{Cercignani_BEnApplications}.

\textbf{Collision invariants and conservation laws. } It's not hard to find that
\begin{equation}
\label{maxwell_form2}
    \int_{\mathbb{R}^{d}} Q(f,f)(v)\phi(v)dv = \frac{1}{2}\int_{\mathbb{R}^{d\times d}}ff_{*} \int_{\mathbb{S}^{d-1}} [\phi + \phi_{*} - \phi' - \phi'_{*}]B(|v-v_{*}|, \sigma)d\sigma dv_{*}dv
\end{equation}
Therefore, one can easily deduce (\ref{maxwell_form2}) is identical to zero if
\begin{equation}
\label{coll_invariant}
\phi + \phi_{*} = \phi' + \phi'_{*}
\end{equation}

It's not difficult to think of some prototypical $\phi(v)$ that satisfy (\ref{coll_invariant}), e.g. mass, momentum, kinetic energy and/or
their combinations. Fortunately, it's also provable that (\ref{coll_invariant}) holds if and only if $\phi(v)$ is in the space spanned by these moments. We call the $d+2$ test functions $\phi(v)=1, \textbf{v}, |v|^{2}$ \emph{collision invariants}, which correspond to the conservation of mass, momentum and kinetic energy.

\textbf{Entropy dissipation and H theorem.} For any $f(v)>0$, if set $\phi(v)=\log f(v)$, then one can prove the following dissipation of entropy
\begin{equation}
\label{entropy_disp}
\frac{d}{dt}\int_{\mathbb{R}^{d}}f(v)\log f(v) dv = \int_{\mathbb{R}^{d}}Q(f,f)(v)\log f(v)dv \leq 0
\end{equation}

This dissipation relation actually implies one fact that the equilibrium state will be given by a \emph{Maxwellian distribution}
\begin{equation}
\label{Maxwell_dist}
M(v) = \frac{\rho}{(2\pi T)^\frac{d}{2}}\exp(-\frac{|v-\bar{v}|^2}{2T})
\end{equation}
where $\rho$ is the macroscopic density, $\bar{v}$ the macroscopic
velocity and $T$ the macroscopic temperature ($=R\vartheta$ where
$\vartheta$ is the absolute temperature, $R$ is a gas constant).

\section{The Linearized Boltzmann Operators and Spectral Gaps}\label{sec:spectralgap}

Since our interest focuses on the behavior in a regime very close to equilibrium, we consider the perturbation near equilibrium
\begin{equation}\label{perturbation}
    f=\mu+\mu^{\frac{1}{2}}F \, ,
\end{equation}
with $\mu=(2\pi)^{-\frac{d}{2}}e^{-\frac{|v|^{2}}{2}}$ being the
normalized equilibrium with mass 1, momentum 0 and temperature 1. Then the linearization of homogeneous Boltzmann equation gives an equation for the perturbation $F(v)$,
\begin{equation*}
\partial_{t}F = -L(F) -\Gamma(F,F) \, ,
\end{equation*}
where the \emph{linearized Boltzmann collision operator} $L$ writes
\begin{equation}\label{eqn:Def_linBoltzmann_L}
L(F)=-2\mu^{-\frac{1}{2}}Q_{sym}(\mu , \mu^{\frac{1}{2}} F) \, ,
\end{equation}
and the bilinear operator $\Gamma$ writes
\begin{equation*}
\Gamma(F,F)=\int_{\mathbb{R}^{d}}\int_{\mathbb{S}^{d-1}}\mu^{\frac{1}{2}}_{*}\left[FF_{*}-F'F'_{*} \right]B(|v-v_{*}|,\sigma)d\sigma dv_{*} \, ,
\end{equation*}
which will be a negligible term when close to equilibrium.

In order to find a suitable \emph{Dirichlet form} associated to the linearized Boltzmann operator $L$ that allows us to generate a sound Rayleigh Quotient structure, one can perform on \eqref{eqn:Def_linBoltzmann_L} exchanges of coordinates $v \leftrightarrow v_{*}$ and $(v,v_{*})\leftrightarrow (v',v'_{*})$. Note, in the latter case, there is a reversal of direction on $\sigma$ for which the Jacobian of change of coordinates remains 1. We eventually obtain the Dirichlet form that writes
\begin{eqnarray}
\begin{aligned}
\label{Dirichlet}
    &\langle L(F),F\rangle :=-\int_{\mathbb{R}^{d}} 2Q_{sym}(\mu,\mu^{\frac{1}{2}}F)F\mu^{-\frac{1}{2}}(v)dv  \\
    &=\frac{1}{4}\int_{\mathbb{R}^{2d}}\int_{\mathbb{S}^{d-1}}\mu\mu_{*}\left(\frac{F(v')}{\mu^{\frac{1}{2}}(v')}+\frac{F(v'_{*})}{\mu^{\frac{1}{2}}(v'_{*})}-\frac{F(v)}{\mu^{\frac{1}{2}}(v)}-\frac{F(v_{*})}{\mu^{\frac{1}{2}}(v_{*})}\right) ^{2}\\
    & \quad \cdot B(u, \sigma)d\sigma dv_{*} dv \\
    &=-\int_{\mathbb{R}^{2d}}\int_{\mathbb{S}^{d-1}} \left[F(v)\mu^{\frac{1}{2}}(v_{*})+F(v_{*})\mu^{\frac{1}{2}}(v)\right]\left[F(v')\mu^{\frac{1}{2}}(v'_{*})-F(v)\mu^{\frac{1}{2}}(v_{*})\right] \\
    &\quad \cdot B(u, \sigma)d\sigma dv_{*}dv \\
    &=-\int_{\mathbb{R}^{2d}}\int_{\mathbb{S}^{d-1}} \mu(v) \mu(v_{*})\left[g(v)+g(v_{*})\right]\left[g(v')-g(v)\right]B(u,\sigma)d\sigma dv_{*}dv \\
\end{aligned}
\end{eqnarray}
where the second line uses the fact that $\mu\mu_{*}=\mu'\mu'_{*}$ and the last line changes $g(v)=\frac{F(v)}{\mu^{1/2}(v)}$. The linear operator $L$ has basic properties \cite{Cercignani_BEnApplications}:
\begin{itemize}
\item It is an unbounded symmetric (self-adjoint) operator on un-weighted $L^{2}(\mathbb{R}^{d})$;
\item It is a positive operator, i.e has non-negative real spectrum;
\item The null space $F(v)\in \mathcal{N}(L)=\mu^{\frac{1}{2}}\cdot \text{span}\{1,v,|v^{2}|\}$. Thus $0$ is an eigenvalue of multiplicity $d+2$.
\end{itemize}

To study the decay of $F$ for $t\rightarrow\infty$, we need to study the eigenvalue problem
\begin{equation}\label{LB_eigenproblem}
Lg=\lambda g
\end{equation}
for which, we have known it has $d+2$ eigen-solutions (collision invariants) for $\lambda=0$. All the other $\lambda>0$.

If the eigen-solutions of eqn (\ref{LB_eigenproblem}), $g_{\lambda}(v)$, can be taken as generalized functions, then it's known that the linearized Boltzmann equation
\begin{equation}
\partial_{t}F=-LF
\end{equation}
has solutions written as \cite{Cercignani_BEnApplications, Gelfand_genfun}
\begin{equation}
F(v,t)=\int^{\lambda_{\infty}}_{\lambda_{0}}e^{-\lambda t}g_{\lambda}(v)h_{\lambda}(v)d\lambda + \sum^{d+1}_{i=0}h_{i}(v)\phi_{i}(v)
\end{equation}
where $h_{\lambda}(v)$ is an arbitrary function depending on $\lambda$ and the integrals extends to all $\lambda\neq 0$ for which $g_{\lambda}\neq 0$ exists.
If some $\lambda$'s form a discrete set, then the corresponding integral is replaced by the sum $\sum_{k}e^{-\lambda_{k}t}g_{k}(v)h_{k}(v)$. So, if $\lambda_{0}\neq 0$ exists,
$F(v)$ decays exponentially into the null space $\mathcal{N}(L)$; while if $\lambda_{0}=0$, the decay is not exponential and depends on initial datum.

\begin{definition}
[Spectral Gap \cite{Mouhot_SGReview}]\label{def1}
Denote by $\sigma(L)$ the spectrum for the operator $L$. For the case $\sigma(L)\subseteq R^{+}$ (i.e. non-negative spectrum), the spectral gap is
defined as the distance between $0$ and $\sigma(L)\setminus \{0\}$.
\end{definition}

The spectral gap is the solution to the constrained minimization problem:
\begin{equation}
\begin{split}\label{eqn:Minimization}
& \min \quad \frac{\langle L(F),F\rangle}{\parallel F \parallel^{2}_{L_{2}}} \\
& s.t \quad F\perp \mathcal{N}(L)
\end{split}
\end{equation}

It tells us how the entropy production functional (the Dirichlet form) is bounded by the relative entropy and thus gives an estimate on the exponential decay of the solutions to the Boltzmann equation.

Thus both the theoretical as well as numerical existence of this ``spectral gap" is very important to us. We will see in the following that the existences of spectral gaps depend on the types of intermolecular potentials ($\gamma$) as well as the integrability of the angular cross-section ($b(\cos(\theta))$). We will look at them separately.

\subsection{Integrable Angular Cross-section}\label{sec:cutoff}

The study on the spectral properties of the linearized Boltzmann collision operator can be traced back all the way to Hilbert \cite{Hilbert}. He suggested the splitting, in the case of hard spheres, between the local and non-local parts of $L$ and proved the compactness of the non-local part. Then Carleman \cite{Carleman_SG} introduced the use of so-called Weyl's theorem to prove the existence of a spectral gap. Then Grad \cite{Grad_Asymp} generalized it to hard potentials with cutoff ($0<\gamma\leq 1$). Then Caflisch \cite{Caflisch_softSG} and Golse and Poupaud \cite{GolsePoupaud} proved the non-existence of spectral gap for soft potentials with cutoff but the existence of a ``degenerated" spectral gap. All the above results are non-constructive. The first constructive estimates were given by Baranger and Mouhot \cite{BarangerMouhot_SG} for the hard spheres model.

For the integrable angular cross-sections, index $\alpha<0$ in (\ref{angular_cross}). Basically, by splitting, $L$ writes
\begin{equation}\label{L_split}
    L(F)(t,v)=\nu(v)F(t,v)+(\mathbf{K}F)(t,v)
\end{equation}
where the collision frequency
\begin{equation}\label{coll_freq}
  \nu(v)=\int_{\mathbb{R}^{d}}\int_{\mathbb{S}^{d-1}} \mu(v_{*}) B(|v-v_{*}|,\hat{u}\cdot \sigma)d\sigma dv_{*}
\end{equation}
and the integral operator $\mathbf{K}$ with kernel $k(v,\eta)$ can be explicitly given. Here, a remarkable feature is that the non-local $\mathbf{K}$ is a compact bounded integral operator.

According to

\begin{theorem}
[Weyl's]\label{weyl}
The essential spectrum (here, the continuous spectrum due to the self-adjoint $L$) is unchanged under a compact perturbation.
\end{theorem}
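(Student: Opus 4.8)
The plan is to prove the abstract statement behind Theorem \ref{weyl}: if $A=A^{*}$ is a (possibly unbounded) self-adjoint operator on a Hilbert space $\mathcal{H}$ and $K=K^{*}$ is compact, then $A$ and $A+K$ have the same essential spectrum. Because $K$ is bounded we have $D(A+K)=D(A)$ and $A+K$ is again self-adjoint, so the essential spectrum is well defined for both; in the application one takes $A$ to be the multiplication operator $\nu(v)$ and $K=\mathbf{K}$ from the splitting \eqref{L_split}, so that $L=A+K$ and $\sigma_{\mathrm{ess}}(L)=\sigma_{\mathrm{ess}}(\nu)$ is the essential range of the collision frequency. The central device is \emph{Weyl's singular-sequence criterion}: for self-adjoint $A$, a real number $\lambda$ lies in $\sigma_{\mathrm{ess}}(A)$ if and only if there is a sequence $(u_{n})\subset D(A)$ with $\|u_{n}\|=1$, $u_{n}\rightharpoonup 0$ weakly, and $\|(A-\lambda)u_{n}\|\to 0$.

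First I would establish this criterion from the spectral theorem. If $\lambda\in\sigma_{\mathrm{ess}}(A)$, then the spectral projection $E_{A}\big((\lambda-\tfrac1m,\lambda+\tfrac1m)\big)$ has infinite rank for every $m\in\NN$, for otherwise $\lambda$ would be an isolated eigenvalue of finite multiplicity and hence in the discrete spectrum. Choosing unit vectors in mutually orthogonal infinite-dimensional slices of these ranges produces an orthonormal sequence $(u_{m})$, which therefore converges weakly to $0$, while the spectral calculus gives $\|(A-\lambda)u_{m}\|\le\tfrac1m$. Conversely, a singular sequence precludes $\lambda$ from being a regular point or an isolated eigenvalue of finite multiplicity: projecting $(u_{n})$ off a hypothetical finite-rank eigenspace and using that the resolvent is bounded on the orthogonal complement would force $\|u_{n}\|\to 0$, contradicting $\|u_{n}\|=1$.

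The heart of the argument is then short. Let $\lambda\in\sigma_{\mathrm{ess}}(A)$ with singular sequence $(u_{n})$. Since $K$ is compact and $u_{n}\rightharpoonup 0$, compactness converts weak convergence into strong convergence, so $\|Ku_{n}\|\to 0$. Hence $\|(A+K-\lambda)u_{n}\|\le\|(A-\lambda)u_{n}\|+\|Ku_{n}\|\to 0$, and $(u_{n})$ is a singular sequence for $A+K$ at $\lambda$; by the criterion $\lambda\in\sigma_{\mathrm{ess}}(A+K)$, giving $\sigma_{\mathrm{ess}}(A)\subseteq\sigma_{\mathrm{ess}}(A+K)$. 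Writing $A=(A+K)+(-K)$ with $-K$ again compact and self-adjoint yields the reverse inclusion, so the two essential spectra coincide. (An alternative route is through Fredholm theory, characterizing $\sigma_{\mathrm{ess}}$ as the set where $A-\lambda$ fails to be Fredholm and invoking stability of the Fredholm index under compact perturbations, but the singular-sequence proof is the more self-contained one here.)

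The main obstacle is not the compactness step, which is immediate, but the careful proof of Weyl's criterion for the \emph{unbounded} operator $L$: one must verify that the singular vectors can be selected inside $D(L)$ and that the spectral-measure estimates are valid despite the unboundedness of $\nu(v)$. Because $K$ is bounded there is no domain mismatch between $L$ and $L+K$, so the only genuine care required is to phrase the spectral-theorem construction of the singular sequence for an unbounded self-adjoint operator throughout.
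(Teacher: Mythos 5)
Your proof is mathematically sound, but there is nothing on the paper's side to compare it against: the paper states Theorem \ref{weyl} as a classical fact (Weyl's theorem) and never proves it, passing immediately to the conclusion that the continuous spectrum of $L$ is carried entirely by the multiplication part $\nu(v)$ in the splitting \eqref{L_split}. What you have written is the standard self-contained justification — Weyl's singular-sequence criterion, established from the spectral theorem, plus the observation that compactness of $K$ upgrades the weak convergence $u_{n}\rightharpoonup 0$ to $\|Ku_{n}\|\to 0$, with the reverse inclusion obtained by the symmetric decomposition $A=(A+K)+(-K)$. Both directions are handled correctly, including the domain remark $D(A+K)=D(A)$ that makes the statement meaningful for the unbounded collision-frequency operator, and your identification of $A=\nu(v)$, $K=\mathbf{K}$ matches exactly how the paper uses the theorem. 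Two places where your sketch leans on standard but unstated facts: the selection of the orthonormal singular sequence should be phrased as an inductive choice (each spectral projection $E_{A}\bigl((\lambda-\tfrac1m,\lambda+\tfrac1m)\bigr)$ has infinite rank, so one can pick a unit vector in its range orthogonal to the finitely many vectors already chosen), and the converse direction of the criterion needs the fact that for an isolated eigenvalue of finite multiplicity the operator $A-\lambda$ has bounded inverse on the range of $1-P$ — you state both, and both are correct, but in a fully written version they deserve a line of proof each. In short: your proposal fills a gap the paper deliberately leaves to the literature, and it does so by the canonical route; the alternative Fredholm-index argument you mention would work equally well but buys nothing extra here, since self-adjointness makes the singular-sequence approach the most economical.
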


We easily get that the information of continuous spectrum is completely contained in the local part $\nu(v)$. If assuming a normalized angular cross-section,
i.e. $ \int_{S^{d-1}}b(\hat{u}\cdot \sigma)d\sigma=1$, then,
\begin{equation}\label{eqn:collisionfrq}
    \nu(v)=(2\pi)^{-\frac{d}{2}}\int_{\mathbb{R}^{d}} |v-v_{*}|^{\gamma}e^{-\frac{|v_{*}|^{2}}{2}}dv_{*}
\end{equation}
\begin{itemize}
\item $\gamma\geq 0$,  which is the hard potential model, we can see the continuous spectrum will range from some positive value to infinity. What's left is the discrete spectrum, i.e the eigenvalues. There will be a smallest positive one, which is the spectral gap;
\item $\gamma< 0$,which is the soft potential model, the continuous spectrum can go all the way down to zero; thus we cannot expect a spectral gap. (But, there will be a ``degenerate" one.)
\end{itemize}

The spectrum can be described with pictures, see Figure \ref{fig:sg_VHP}, Figure \ref{fig:sg_Maxwell} and Figure \ref{fig:sg_soft}.
\begin{figure}[!htb]
\centering
\begin{minipage}[t]{0.45\linewidth}
\centering
\includegraphics[width=60mm]{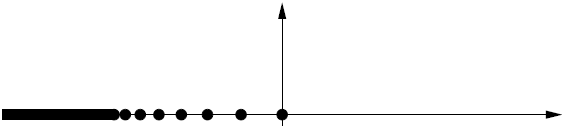}
\caption{Spectrum of $-L$ for variable hard potential with angular cutoff}\label{fig:sg_VHP}
\end{minipage}
\hfill
\begin{minipage}[t]{0.45\linewidth}
\centering
\includegraphics[width=60mm]{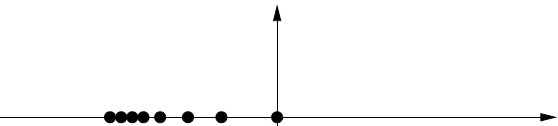}
\caption{Spectrum of $-L$ for Maxwell type with angular cutoff}\label{fig:sg_Maxwell}
\end{minipage}
\hfill
\begin{minipage}[t]{0.45\linewidth}
\centering
\includegraphics[width=60mm]{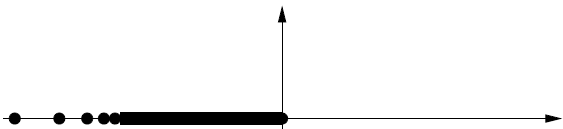}
\caption{Spectrum of $-L$ for soft potential with angular cutoff}\label{fig:sg_soft}
\end{minipage}
\end{figure}

Thus the geometry of the spectrum of linearized Boltzmann operators is clear to us. We will revisit the details of splitting in the next section. A numerical treatment can be designed based on this property of ``splitting".

\subsection{Non-integrable Angular Cross-section} \label{sec:noncutoff_conjecture}

We no longer have the above ``splitting" property with an non-integrable $b(\cos(\theta))$. Thus the above perturbation theories
may no longer directly apply to the spectrum of non-cutoff linearized Boltzmann. However, with a suitable choice of truncated angular domain, which depends on relative velocities, one can still perform some ``splitting" and study each term separately. Thus some constructive coercivity estimates for the Dirichlet form can be found and so for the spectral gaps. This is
what Mouhot \& Strain \cite{MouhotStrain_SG} conjectured and partially proved
{
\begin{theorem}[Mouhot \& Strain]
With the collision kernel $B$ specified in this paper, one has
\begin{itemize}
  \item For any $\epsilon>0$ there is a constructive constant $C_{B,\epsilon}$, such that the Dirichlet form satisfies:
  \begin{equation}\label{}
    \langle LF,F\rangle \geq C_{B,\epsilon}\|(F-\mathbf{P}F)<v>^{\gamma+\alpha-\epsilon}\|^{2}_{L^{2}(R^{d})}.
  \end{equation}
  \item There is a non-constructive constant $C_{B,0}$ such that
  \begin{equation}\label{}
    \langle LF,F\rangle \geq C_{B,0}\|(F-\mathbf{P}F)<v>^{\gamma+\alpha}\|^{2}_{L^{2}(R^{d})}.
  \end{equation}
\end{itemize}
\end{theorem}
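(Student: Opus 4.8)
The plan is to derive both inequalities from one weighted coercivity estimate for a velocity-dependent cutoff truncation of $L$. Since $\mathbf{P}$ is the orthogonal projection onto $\mathcal{N}(L)$ and $L$ is self-adjoint with $L\mathbf{P}=0$, we have $\langle LF,F\rangle=\langle L(F-\mathbf{P}F),\,F-\mathbf{P}F\rangle$, so it is enough to treat $F$ with $\mathbf{P}F=0$ and prove $\langle LF,F\rangle\geq C\|F\langle v\rangle^{\gamma+\alpha-\epsilon}\|_{L^{2}}^{2}$. I would work with the manifestly nonnegative symmetric form from (\ref{Dirichlet}): with $g=F/\mu^{1/2}$,
\begin{equation*}
\langle LF,F\rangle=\tfrac14\int_{\mathbb{R}^{2d}}\int_{\mathbb{S}^{d-1}}\mu\,\mu_{*}\big(g'+g'_{*}-g-g_{*}\big)^{2}B(|u|,\sigma)\,d\sigma\,dv_{*}\,dv\;\geq\;0,
\end{equation*}
so the task becomes a sharp lower bound for a nonnegative quadratic functional of $g$.

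The central device, replacing the local/compact decomposition (\ref{L_split}) that is no longer available for non-integrable $b$, is to cut the angular kernel at a \emph{velocity-dependent} angle $\theta_{0}(|u|)\sim\min(|u|^{-1},1)$. Writing $b=b^{\mathrm{tr}}+b^{\mathrm{gr}}$, with $b^{\mathrm{tr}}$ supported on $\{\theta\geq\theta_{0}\}$ and $b^{\mathrm{gr}}$ carrying the grazing singularity on $\{\theta<\theta_{0}\}$, the symmetric form splits into two nonnegative pieces, and discarding the grazing one leaves a genuine cutoff operator $L^{\mathrm{tr}}$ with $\langle LF,F\rangle\geq\langle L^{\mathrm{tr}}F,F\rangle$. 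A direct computation of its collision frequency,
\begin{equation*}
\nu^{\mathrm{tr}}(v)=\int_{\mathbb{R}^{d}}\mu(v_{*})\int_{\theta\geq\theta_{0}}B\,d\sigma\,dv_{*}\;\sim\;\int_{\mathbb{R}^{d}}\mu(v_{*})\,|u|^{\gamma}\theta_{0}(|u|)^{-\alpha}\,dv_{*}\;\sim\;\langle v\rangle^{\gamma+\alpha},
\end{equation*}
shows that the choice $\theta_{0}\sim|u|^{-1}$ is exactly what turns the angular singularity of order $\alpha$ into the extra factor $|u|^{\alpha}$, and hence into the target weight, while the cap $\theta_{0}\leq1$ keeps a genuine cutoff kernel for moderate velocities. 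Everything then reduces to the weighted coercivity of the cutoff operator, $\langle L^{\mathrm{tr}}F,F\rangle\geq C\|(I-\mathbf{P})F\|_{L^{2}_{\nu^{\mathrm{tr}}}}^{2}$.

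For $L^{\mathrm{tr}}$ I would follow the classical route: Grad's method \cite{Grad_Asymp} with Weyl's Theorem \ref{weyl} gives the splitting $L^{\mathrm{tr}}=\nu^{\mathrm{tr}}-\mathbf{K}^{\mathrm{tr}}$ with $\mathbf{K}^{\mathrm{tr}}$ subordinate to $\nu^{\mathrm{tr}}$, and the constructive hard-sphere estimates of Baranger and Mouhot \cite{BarangerMouhot_SG} quantify the gain term on $\mathcal{N}(L)^{\perp}$. The delicate feature, which I expect to be the main obstacle, is that the truncation angle depends on $|u|$: one must show that $\mathbf{K}^{\mathrm{tr}}$ stays of lower order relative to the now heavier weight $\nu^{\mathrm{tr}}\sim\langle v\rangle^{\gamma+\alpha}$, \emph{uniformly} as the cutoff scale varies, so as to obtain a bound $\langle \mathbf{K}^{\mathrm{tr}}F,F\rangle\leq\delta\langle\nu^{\mathrm{tr}}F,F\rangle+C_{\delta}\|F\|_{L^{2}}^{2}$.

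Finally, the residual unweighted term $\|F\|_{L^{2}}^{2}$ is exactly where the two-part structure of the statement arises. Absorbing it on $\mathcal{N}(L)^{\perp}$ by an explicit but slightly lossy interpolation yields the \emph{constructive} constant $C_{B,\epsilon}$ at the cost of lowering the weight to $\gamma+\alpha-\epsilon$, whereas removing the $\epsilon$ requires a non-constructive compactness (contradiction) argument on a minimizing sequence $F_{n}$ with $\langle LF_{n},F_{n}\rangle\to0$ and $\|F_{n}\langle v\rangle^{\gamma+\alpha}\|_{L^{2}}=1$, forcing a weak limit in $\mathcal{N}(L)$ and a contradiction; this gives the sharp weight $\gamma+\alpha$ but only the non-constructive $C_{B,0}$.
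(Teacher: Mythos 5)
You should note at the outset that the paper never proves this statement: it is quoted as a known theorem of Mouhot and Strain \cite{MouhotStrain_SG}, and the only internal trace of an argument is the sentence preceding it, which says that a splitting with a truncated angular domain depending on the relative velocity yields constructive coercivity estimates. Measured against that, your proposal correctly reconstructs the strategy of the cited proof: the reduction to $\mathbf{P}F=0$ via self-adjointness, the use of the nonnegative symmetric form in (\ref{Dirichlet}) so that discarding the grazing part $b^{\mathrm{gr}}$ can only decrease the Dirichlet form, and the choice $\theta_{0}(|u|)\sim\min(|u|^{-1},1)$, which converts the angular singularity of order $\alpha$ into the enhanced collision frequency $\nu^{\mathrm{tr}}\sim\langle v\rangle^{\gamma+\alpha}$, are all correct and are exactly the mechanism behind the theorem.

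As a proof, however, the proposal stops where the real work begins, and two of the tools you invoke cannot be applied as stated. First, the truncated kernel $B\,\mathbf{1}_{\theta\geq\theta_{0}(|u|)}$ is not of the product form $|u|^{\gamma'}b'(\cos\theta)$, so neither classical Grad splitting theory nor the constructive estimates of Baranger and Mouhot \cite{BarangerMouhot_SG} apply off the shelf; moreover, those estimates concern hard potentials, while the theorem covers $\gamma<0\leq\gamma+\alpha$, for which a cutoff operator with a \emph{fixed} angular truncation has no spectral gap at all (as the paper's own Section 3.1 recalls). The gap must therefore be extracted from the enhanced frequency $\langle v\rangle^{\gamma+\alpha}$, and the subordination estimate $\langle\mathbf{K}^{\mathrm{tr}}F,F\rangle\leq\delta\|F\|^{2}_{L^{2}(\nu^{\mathrm{tr}})}+C_{\delta}\|F\|^{2}_{L^{2}}$, uniform in the velocity-dependent truncation, is precisely the hard content of the cited paper; you flag it as the ``main obstacle'' but do not prove it, and both items of the theorem (not just the constructive one) hinge on it. Second, Weyl's Theorem \ref{weyl} only locates the essential spectrum and is intrinsically non-constructive, so any route through it cannot produce the constructive constant $C_{B,\epsilon}$ required by the first item; the constructive part needs the explicit coercivity machinery carried out by hand, with the $\epsilon$-loss arising when the non-product truncated kernel is compared from below with product-form kernels. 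In short: the roadmap is the right one and matches the original reference, but the two lemmas that constitute the theorem are assumed rather than established.
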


where $<v>=(1+|v|^{2})^{\frac{1}{2}}$. Therefore, it is sufficient to claim that when $\gamma+\alpha\geq 0$, there exists a spectral gap for linearized Botlzmann operator. But they
went further and conjectured the necessary part

\begin{cnjctr}[Mouhot \& Strain]
\label{conj:MouhotStrain}
With $\gamma\in(-d,\infty)$ and $\alpha\in[0,2)$ in $B$, the linearized Boltzmann collision operator associated to $B$ admits a spectral gap if and only if $\gamma+\alpha\geq 0$. Moreover this statement is still valid if one includes formally the case of angular cutoff in $``\alpha=0"$, and add the linearized Landau collision operator as the limit case $``\alpha=2"$.
\end{cnjctr}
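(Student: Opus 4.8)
The plan is to split the biconditional into a \emph{sufficiency} direction (if $\gamma+\alpha\geq0$ then a gap exists) and a \emph{necessity} direction (if $\gamma+\alpha<0$ then no gap exists), resting everything on one guiding principle: the Dirichlet form \eqref{Dirichlet} is, up to constants, equivalent to a weighted anisotropic norm whose velocity weight at infinity behaves exactly like $\langle v\rangle^{\gamma+\alpha}$, where $\langle v\rangle=(1+|v|^2)^{1/2}$. Once this equivalence is available, the coercivity of $\langle LF,F\rangle$ against $\|F-\mathbf{P}F\|_{L^2}^2$ — i.e.\ the existence of a gap in the sense of \eqref{eqn:Minimization} — is governed purely by whether this weight stays bounded below, which happens precisely when $\gamma+\alpha\geq0$.

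The sufficiency direction follows almost immediately from the Mouhot \& Strain theorem quoted above. When $\gamma+\alpha>0$, choose $\epsilon\in(0,\gamma+\alpha]$ in the first, constructive estimate; then $\langle v\rangle^{\gamma+\alpha-\epsilon}\geq1$ on all of $\mathbb{R}^d$, so $\langle LF,F\rangle\geq C_{B,\epsilon}\|F-\mathbf{P}F\|_{L^2}^2$, which is exactly the linearized Cercignani inequality and hence a constructive gap $\lambda\geq C_{B,\epsilon}$. At the endpoint $\gamma+\alpha=0$ the weight is identically $1$, and the second, non-constructive estimate yields $\langle LF,F\rangle\geq C_{B,0}\|F-\mathbf{P}F\|_{L^2}^2$, so a gap still exists, though without an explicit value. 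Since $F-\mathbf{P}F$ is the component of $F$ orthogonal to $\mathcal{N}(L)$, this is precisely the coercivity required by \eqref{eqn:Minimization}.

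The substantive half is necessity, for which I would first upgrade these lower bounds to a \emph{two-sided} estimate, proving that $\langle LF,F\rangle$ is also \emph{bounded above} by the same weighted anisotropic norm. With continuity in hand the argument is a concentration/localization construction: fix a smooth bump $\chi$ of unit $L^2$ norm supported in a ball of radius $O(1)$ and set $F_R(v)=\chi(v-Re_1)$, a unit-mass profile riding out to $|v|\approx R$. Because $\mathcal{N}(L)=\mu^{1/2}\cdot\mathrm{span}\{1,v,|v|^2\}$ decays like a Gaussian, the projection $\mathbf{P}F_R$ is exponentially small in $R$, so replacing $F_R$ by $F_R-\mathbf{P}F_R$ costs nothing to leading order and preserves $\|F_R-\mathbf{P}F_R\|_{L^2}=1+o(1)$. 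On such a profile the multiplicative part of the norm contributes $\sim R^{\gamma+\alpha}$, and — the delicate point — the anisotropic fractional-derivative part, of order $\alpha/2$ in the directions transverse to $v$, also scales like $R^{\gamma+\alpha}$ times an $O(1)$ Sobolev norm of $\chi$, since weight and derivative localize together. Hence $\langle LF_R,F_R\rangle/\|F_R-\mathbf{P}F_R\|_{L^2}^2\lesssim R^{\gamma+\alpha}\to0$ as $R\to\infty$ whenever $\gamma+\alpha<0$, so the infimum in \eqref{eqn:Minimization} is $0$ and no gap exists.

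The main obstacle is the sharp \emph{upper} bound feeding the necessity direction: in the non-cutoff regime $\alpha\in(0,2)$ there is no pointwise splitting $L=\nu+\mathbf{K}$, and one must tame the singular angular integral — whose $\sin^{-(d-1)-\alpha}(\theta/2)$ weight is non-integrable — by exploiting the quadratic cancellation $\big(g(v')-g(v)\big)$ exhibited in the last line of \eqref{Dirichlet}, so that the $\theta$-integral converges to a genuine fractional-diffusion operator of order $\alpha/2$ carrying the weight $\langle v\rangle^{\gamma+\alpha}$. Confirming that the test-function construction does not incur a larger derivative cost, which would spoil the $R^{\gamma+\alpha}$ bound, is where the concentration estimate must be tuned, e.g.\ by letting the bump width grow slowly with $R$. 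Finally, the two formal endpoints are handled separately and are consistent with the criterion: the cutoff case ``$\alpha=0$'' reduces to the Weyl-theorem analysis of Section \ref{sec:cutoff}, where the essential spectrum coincides with the range of $\nu(v)\sim\langle v\rangle^{\gamma}$ and a gap exists iff $\gamma\geq0$; and the Landau limit ``$\alpha=2$'' is reached via the grazing-collision limit, for which the known linearized-Landau spectral-gap results give a gap iff $\gamma+2\geq0$, matching $\gamma+\alpha\geq0$.
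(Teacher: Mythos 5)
The first thing to note is that the paper does not prove this statement at all: it appears there as a \emph{conjecture}, whose sufficiency half is the quoted theorem of Mouhot \& Strain, whose necessity half is credited to Gressman \& Strain \cite{GressmanStrain}, and whose support within the paper itself is purely numerical (the DG projection of the Dirichlet form, the approximate Rayleigh quotient, and the observed decay of the numerical gap to zero as the truncated domain grows when $\gamma+\alpha<0$). So there is no internal proof to compare against; you are attempting an analytic proof of something the paper only tests computationally.

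Judged on its own terms, your sufficiency direction is fine: taking $\epsilon\in(0,\gamma+\alpha]$ in the constructive estimate (or the non-constructive one at the endpoint $\gamma+\alpha=0$) makes the weight $\langle v\rangle^{\gamma+\alpha-\epsilon}\geq 1$, which gives exactly the coercivity demanded by (\ref{eqn:Minimization}). The genuine gap is in the necessity direction. Your entire argument hinges on a sharp \emph{upper} bound of $\langle LF,F\rangle$ by the same weighted anisotropic norm, which you describe as an ``upgrade'' of the lower bounds but never establish. This is not a routine upgrade: it is precisely the hard content of the problem, and in the literature it is exactly the Gressman--Strain theorem (sharp constructive upper \emph{and} lower bounds in terms of a geometric fractional Sobolev norm) that the paper cites as having ``answered'' the necessity. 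Without that two-sided equivalence, the concentration construction $F_R=\chi(\cdot-Re_1)$ proves nothing, because smallness of the candidate weighted norm $\sim R^{\gamma+\alpha}$ cannot be converted into smallness of $\langle LF_R,F_R\rangle$. Moreover, even granting the equivalence, your scaling claim that the fractional part of the norm contributes only $R^{\gamma+\alpha}$ on a translated unit bump requires real verification: the relevant anisotropic norm measures order-$\alpha/2$ regularity along a lifted paraboloid whose radial and angular scales degenerate differently as $|v|\to\infty$, and your own suggestion that the bump width ``grow slowly with $R$'' is an acknowledgment that this step is a plan, not an argument. In short, the proposal correctly reconstructs the architecture by which the conjecture was resolved in the literature, but the load-bearing estimate is assumed rather than proved.
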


Recently the necessary part of the conjecture, that is, if the linearized Boltzmann operator observes a spectral gap then $\gamma+\alpha\geq 0$, was answered by Gressman and Strain \cite{GressmanStrain} by proving sharp constructive
upper and lower bounds for the linearized collision operator in terms of a geometric fractional Sobolev norm. In the following session, we will exhibit numerically, if $\gamma+\alpha < 0$, then there exists no spectral gap.

\section{The Discontinuous Galerkin Projections and Approximate Rayleigh Quotients}\label{sec:NumericalFormulation}

In this section, we introduce how to project the original eigenvalue problem onto a finite approximation space, based on Discontinuous Galerkin methods. The key is the treatment of
the angular integrals over the $d-1$ dimensional sphere $\mathbb{S}^{d-1}$. Our DG approximation can handle both integrable and non-integrable angular cross-sections. This is also
the basement of the deterministic DG solvers for fully nonlinear Boltzmann equations, which was also developed by the authors \cite{DGBE_ChenglongGamba}.
Particularly, for operators with integrable angular cross-sections, it can be specially reformulated based on so-called ``Grad splitting'' and can be easily projected onto our DG meshes.

\subsection{Grad Splitting for Integrable Angular Cross-section}
For integrable angular cross-sections, we can easily develop a numerical formulation based on the ``splitting'' property of the operator. Recall the splitting (\ref{L_split}) for $L$.
The collision frequency $\nu(v)$ is well-defined. $\mathbf{K}F$  is given by
\begin{equation}\label{def_K}
\begin{split}
    \mathbf{K}F(v)&=\mu^{\frac{1}{2}}(v)\int_{\mathbb{R}^{d}\times\mathbb{S}^{d-1}}\mu^{\frac{1}{2}}(v_{*})F(v_{*})B(|v-v_{*}|,\hat{u}\cdot \sigma)d\sigma dv_{*} \\
    &-\!\!\int_{_{\mathbb{R}^{d}\times\mathbb{S}^{d-1}} }\!\!\!\!\!\![\mu^{\frac{1}{2}}(v_{*})\mu^{\frac{1}{2}}(v')F(v'_{*})\!+\!\mu^{\frac{1}{2}}(v_{*})\mu^{\frac{1}{2}}(v'_{*})F(v')]\!B(\!|v-v_{*}|,\hat{u}\cdot \sigma\!)d\sigma dv_{*}\\
    &:=\mathbf{K}_{1}F - \mathbf{K}_{2}F
\end{split}
\end{equation}
where one can define the kernel $k_{1}(v,\xi)$ for the integral operator $\mathbf{K}_{1}$
\begin{equation}
 \label{def_K1}
k_{1}(v,\xi)=\mu^{\frac{1}{2}}(v)\mu^{\frac{1}{2}}(\xi)|v-\xi|^{\gamma}\int_{S^{d-1}} b((v-\xi)\cdot\sigma)d\sigma \, .
\end{equation}
The remaining part of (\ref{def_K}) defines $\mathbf{K}_{2}$. The kernel $k_{2}(v,\xi)$ will be derived explicitly.

Let's start from \textit{Carleman Representation}, which is actually transforming the integrals over spheres to integrals over some orthogonal planes.

\begin{lemma}
[Carleman]\label{lem1}
The following identity holds for any appropriate test functions $\phi(z)$:$\mathbb{R}^{d}\rightarrow \mathbb{R} $\\
 \begin{equation}\label{Carleman}
\int_{S^{d-1}}\phi(\frac{|u|\sigma - u}{2})d\sigma = 2^{d-1}|u|^{2-d}\int_{ \mathbb{R}^{d}} \phi(z)\delta(|z|^{2}+z\cdot u)dz
\end{equation}
where $u\in\mathbb{R}^{d}$ is an arbitrary vector and $\delta$ is the one-dimensional Dirac delta function.
\end{lemma}

If we take the following changes of variables
\begin{equation}
 u=v-v_{*} \, ,\quad z=\frac{1}{2}(|u|\sigma-u)\, , \quad w= -\frac{1}{2}(|u|\sigma+u) \,,
\end{equation}
then, $u=-(z+w)$, $v_{*}=v+w+z$, $v'_{*}=v+w$ and $\xi:=v'=v+z$.
Noticing the relationship $|v'-v|=|u|\sin(\theta/2)$ and $|u|=(|\xi-v|^{2}+|w|^{2})^{\frac{1}{2}}$, we obtain the integral form of
$\mathbf{K}_{2}F$ given by
\begin{equation*}
\begin{split}
&\mathbf{K}_{2}F(v) \!
:= \! 2^{d}\!\!\int_{_{\mathbb{R}^{2d}}}\!\!\! \mu^{\frac{1}{2}}(\!v\!+\!w\!+\!z\!)\mu^{\frac{1}{2}}(\!v\!+\!w\!)F(\!v\!+\!z\!) |u|^{2-d}\!B(u,\frac{2z\!+\!u}{|u|})\delta(z\!\cdot\! (\!z\!+\!u\!))dzdu \\
&= 2^{d}\int_{_{\mathbb{R}^{2d}}}\!\! \mu^{\frac{1}{2}}(v+w+z)\mu^{\frac{1}{2}}(v+w)F(v+z)\tilde{B}(w,z)\delta(z\cdot w)dzdw \\
&= 2^{d}\int_{_{\mathbb{R}^{d}\times w\perp z}} |z|^{-1} \mu^{\frac{1}{2}}(v+w+z)\mu^{\frac{1}{2}}(v+w)F(v+z)\tilde{B}(w,z)dzdw \\
&= \frac{2}{\pi} \int_{_{\mathbb{R}^{d}\times w\perp z}} F(\xi)\mu^{\frac{1}{2}}(\xi+w)\mu^{\frac{1}{2}}(v+w)|\xi-v|^{-d-\alpha}\left(|w|^{2}+|\xi-v|^{2}\right)^{\frac{\gamma+1+\alpha}{2}}d\xi dw \, ,
\end{split}
\end{equation*}
where we used the relationship $w\perp z$ and
\begin{equation*}
 \tilde{B}(w,z) = |w+z|^{2-d}B(-(w+z), \frac{z-w}{|z+w|}) = \frac{1}{2^{d-1}\pi}|z|^{-(d-1)-\alpha} \left(|w|^{2}+|z|^{2}\right)^{\frac{\gamma+1+\alpha}{2}} \, .
\end{equation*}
Therefore, the explicit kernel $k_{2}(v,\xi)$ for integral operator $\mathbf{K}_{2}$ can be extracted, which writes
\begin{equation*}
k_{2}(v,\xi)= \frac{2}{\pi}|\xi-v|^{-d-\alpha} \int_{\Pi}\mu^{\frac{1}{2}}(\xi+w)\mu^{\frac{1}{2}}(v+w)\left(|w|^{2}+|\xi-v|^{2}\right)^{\frac{\gamma+1+\alpha}{2}} dw \, ,
\end{equation*}
where the plane $\Pi:=\{w\in\mathbb{R}^{d}: (\xi-v)\cdot w=0\}$.

However, we can simplify more, following tricks from \cite{CercignaniIllnerPulvirenti}. Notice that
\begin{equation}\label{}
    |v+w|^{2}+|\xi+w|^{2}=2|w+\frac{1}{2}(\xi+v)|^{2}+\frac{1}{2}|\xi-v|^{2} \, ,
\end{equation}
and decompose $\frac{1}{2}(\xi+v)$ into parts perpendicular to $\xi-v$ and parallel to $\xi-v$. The projection onto $\xi-v$ is denoted by $\zeta^{\perp}$, which is
\begin{equation}\label{}
    \zeta^{\perp}:= \left(\frac{1}{2}(\xi+v)\cdot \frac{\xi-v}{|\xi-v|}\right)\frac{\xi-v}{|\xi-v|}=\left(\frac{1}{2}\frac{|\xi|^{2}-|v|^{2}}{|\xi-v|}\right)\frac{\xi-v}{|\xi-v|} \, .
\end{equation}
Its orthogonal part, denoted by $\zeta$, is in the same plane as $w$,
\begin{equation}\label{eqn:Def_zeta}
    \zeta : = \frac{1}{2}(\xi+v)-\zeta^{\perp} = \frac{1}{2}(\xi+v)-\left(\frac{1}{2}\frac{|\xi|^{2}-|v|^{2}}{|\xi-v|}\right)\frac{\xi-v}{|\xi-v|} \, .
\end{equation}
Thus, plugging these into $k_{2}$ gives
\begin{equation}\label{def_K2}
\begin{split}
    k_{2}(v,\xi)&=\frac{2}{\pi}(2\pi)^{-\frac{d}{2}}|\xi-v|^{-d-\alpha}\exp(-\frac{1}{8}|\xi-v|^{2}-\frac{1}{8}\frac{(|\xi|^{2}-|v|^{2})^{2}}{|\xi-v|^{2}})\\
    &\cdot\int_{\Pi}\exp(-\frac{|w+\zeta|^{2}}{2})\left(|\xi-v|^{2}+|w|^{2}\right)^{\frac{\gamma+1+\alpha}{2}}dw \, .
\end{split}
\end{equation}
Clearly, $k_{2}(v, \xi)$ is symmetric.

\emph{Remark. } The kernel $k_{2}(v,\xi)$ can be further simplified if $\gamma+1+\alpha=0$. For example, in the case of 2-d Maxwell model or 3-d hard sphere model, since $\zeta$ is just a shift of $w$ on plane $\Pi$ and thus the integrations on plane $\Pi$ can be done analytically,
\begin{equation}
  k_{2}(v,\xi)=2^{\frac{1}{2}}\pi^{-\frac{3}{2}}|\xi-v|^{-(d-1)-1-\alpha}\exp(-\frac{1}{8}|\xi-v|^{2}-\frac{1}{8}\frac{(|\xi|^{2}-|v|^{2})^{2}}{|\xi-v|^{2}})\, .
\end{equation}
Thus,
\begin{equation}\label{eqn:GradSplitting}
    L(F)(v)=\nu(v)F(v) + \mathbf{K}F(v) \, ,
\end{equation}
where the kernel for the integral operator $\mathbf{K}$ is explicitly given
\begin{equation}\label{}
\begin{split}
    k(v,\xi)&=k_{1}(v,\xi)-k_{2}(v,\xi) \\
    &=(2\pi)^{-\frac{d}{2}}\exp(-\frac{|v|^{2}+|\xi|^{2}}{4})|\xi-v|^{\gamma}\int_{S^{d-1}}b(\sigma)d\sigma \\
    &-\frac{2}{\pi}(2\pi)^{-\frac{d}{2}}|\xi-v|^{-d-\alpha}\exp(-\frac{1}{8}|\xi-v|^{2}-\frac{1}{8}\frac{(|\xi|^{2}-|v|^{2})^{2}}{|\xi-v|^{2}})\\
    &\cdot\int_{\Pi}\exp(-\frac{|w+\zeta|^{2}}{2})\left(|\xi-v|^{2}+|w|^{2}\right)^{\frac{\gamma+1+\alpha}{2}}dw \, ,
\end{split}
\end{equation}
which makes the integral operator $\mathbf{K}$ compact on $L^{2}(\mathbb{R}^{d})$.

Combining (\ref{def_K1}) and (\ref{def_K2}) yields the explicit definition for $\mathbf{K}$, which can be proven to be a Hilbert-Schmidt integral operator (this needs $\alpha<0$ which is satisfied due to the integrability of angular cross-sections), and thus $k_{1}(v,\xi)-k_{2}(v,\xi)$ is
$L^{2}$ integrable. Starting from Carleman representation, we actually have recovered the results from \textit{Grad splitting} \cite{Grad_Asymp}.

\subsection{Discontinuous Galerkin Projections}

Albeit the high complexity of DG discretizations, we still prefer DG approximations because with little knowledge of the behaviors of eigenfunctions, DG approximations are expected to
accommodate various kinds of regular and/or irregular eigenfunctions and thus to provide more accurate eigenvalues. To apply DG, we first need to build a reasonable truncated domain.

\subsubsection{Domain and Mesh}

Let's recall the Dirichlet form (\ref{Dirichlet}) for the linearized Boltzmann operator $L$ and the equivalent minimization problem for the spectral gap (\ref{eqn:Minimization}). If we employ change of variables, $g(v)=\frac{F(v)}{\mu^{1/2}(v)}$,
then, equivalently, the spectral gap problem becomes
\begin{equation}
\begin{split}\label{eqn:Minimization_g}
& \min \quad \frac{\langle L(F),F\rangle}{\parallel g \parallel^{2}_{L_{2}(\mu)}} \\
& s.t \quad g\perp \left(\mu^{-\frac{1}{2}}\cdot \mathcal{N}(L)\right)
\end{split}
\end{equation}
where $\parallel \cdot \parallel_{L_{2}(\mu)}$ is the weighted $L^{2}$ norm with weight $\mu(\cdot)$.

It's not difficult to observe that, $g(v)$ can be restricted onto a truncated domain, $\Omega_{v}=[-V,V)^{d}$, which is large enough such that the objective function and constraint in (\ref{eqn:Minimization_g}) will only
differ than their real values within small errors, respectively. Besides, since the whole linearization only makes sense at the regime very close to equilibrium,
it's still reasonable only consider perturbations $F(v)$ with the same ``compact support'' as $\mu(v)$. Thus, in the following, our computing domain is the truncated $\Omega_{v}$, for $g(v)$ and/or $F(v)$.

\emph{Remark. } It's vitally important to pay attention to the domain truncation here. With a velocity cutoff, we are actually dealing with the corresponding cutoff operator
\begin{equation}
L_{\Omega}=\chi_{\Omega}L
\end{equation}
which will definitely possess a spectral gap due to the finite integration domain. Though, see (\ref{eqn:collisionfrq}) and analysis below for example, this will not essentially influence the spectral gap for $\gamma\geq 0$,
yet for soft potential case, $\chi_{\Omega}L$ is expected to have a ``spectral gap" bounded by $\chi_{\Omega}\mu(v)$, up to some constant factors.
However, as $\Omega$ gets larger, we can expect this ``spectral gap" goes to zero. An analytical reasoning is provided in the convervence analysis.

A regular mesh is applied, that is, we divide each direction into $N$ disjoint elements uniformly, such that $[-L,L]=\bigcup_{k}I_{k}$,
where interval $I_{k}=[w_{k-\frac{1}{2}}, w_{k+\frac{1}{2}})$, $w_{k}=-L+(k+\frac{1}{2})\Delta v$, $\Delta v = \frac{2L}{n}$, $k=0 \ldots n-1$
and thus there is a Cartesian partitioning $\mathcal{T}_{h}=\bigcup_{k}E_{k}$, with uniform cubic element $E_{k}=I_{k_{1}}\otimes I_{k_{2}} ...\otimes I_{k_{d}}$, $k=(k_{1}, k_{2},..., k_{d})$.

Discontinuous Galerkin methods assume piecewisely defined basis functions, that is
\begin{equation}
 \label{pw_def}
g(v)=\sum_{k}\textbf{u}_{k}\cdot\Phi(v)\chi_{k}(v)
\end{equation}
where multi-index $k=(k_{1}, k_{2},..., k_{d})$, $0\leq |k| < (n-1)^{3}$; $\chi_{k}(v)$ is the characteristic function over element $E_{k}$; coefficient vector $\textbf{u}_{k}=(\textbf{u}^{0}_{k},..., \textbf{u}^{p}_{k})$,
where $p$ is the total number of basis functions locally defined on $E_{k}$; basis vector $\Phi(v)=(\phi_{0}(v),..., \phi_{p}(v))$.
Usually, we choose element of basis vector $\Phi(v)$ as local polynomial in $P^{p}(E_{k})$, which is the set of polynomials of total degree at most $p$ on $E_{k}$.
For sake of convenience, we select the basis such that $\{ \phi_{i}(v): i=0,...,p\}$ are orthogonal. For example, when $d=3$, $p=1$, local linear basis over element $E_{k}$ can be set as
\begin{equation}\label{eqn:linearbasis}
 \{1, \frac{v_{1}-w_{k_{1}}}{\Delta v}, \frac{v_{2}-w_{k_{2}}}{\Delta v}, \frac{v_{3}-w_{k_{3}}}{\Delta v}\}.
\end{equation}

\subsubsection{Evaluations of Collision Integrals}

For Boltzmann-type equations, the treatment of various collision kernels always remains the most important and challenging part. To demonstrate our scheme, for simplicity, we take piecewise constant
basis functions as example, i.e. $p=0$, only the characteristic function $\chi_{k}(v)$ is applied over each element $E_{k}$.
Due to the possible singularity in angular cross-section, $b(\cos\theta)$, we keep the ``gain-loss" term and will show that this is where the cancellation of singularity occurs. The following techniques have been applied in the development of conservative DG solvers for homogeneous Boltzmann equations \cite{DGBE_ChenglongGamba}. For completeness, here we will describe again.

Plugging (\ref{pw_def}) back into the Dirichlet form (\ref{Dirichlet}) (the last line of formulas) gives, with change of variables $(v,u)\leftarrow (v, v_{*})$, where $u=v-v_{*}$ is the relative velocity,
\begin{equation}\label{1}
    \langle L(F),F\rangle = \mathbf{u}^{T}\mathbf{G}\mathbf{u}
\end{equation}
with $\mathbf{G}$ the ``collision matrix'' with $N\times N$ blocks, each of which is $(p+1)^{d}\times (p+1)^{d}$ block defined as
\begin{equation}\label{G_entry}
\begin{split}
\mathbf{G}(k,m)&=\int_{\mathbb{R}^{d}}\int_{\mathbb{R}^{d}} \mu(v)\mu(v-u) \left(\Phi(v)\chi_{k}(v) + \Phi(v-u)\chi_{k}(v-u)\right) \\
&\otimes  \int_{\mathbb{S}^{d-1}}\left(\Phi(v')\chi_{m}(v') -\Phi(v)\chi_{m}(v)\right)B(u, \sigma)d\sigma dudv
\end{split}
\end{equation}
Let's only look at a generic term
\begin{equation*}
\begin{split}
&\int_{\mathbb{R}^{d}}\int_{\mathbb{R}^{d}} \mu(v)\mu(v-u) \chi_{k}(v)  \int_{\mathbb{S}^{d-1}}\left(\phi_{i}(v')\chi_{m}(v') -\phi_{i}(v)\chi_{m}(v)\right)B(u, \sigma)d\sigma dudv \\
&=\sum_{\bar{k}}\int_{v\in E_{k}}\int_{v-u \in E_{\bar{k}}}\mu(v)\mu(v-u) \int_{_{\mathbb{S}^{d-1}}}\!\!\!\left(\phi_{i}(v')\chi_{m}(v') -\phi_{i}(v)\chi_{m}(v)\right)B(u, \sigma)d\sigma dudv
\end{split}
\end{equation*}
The other terms are evaluated in a same way.

Due to the convolution formulation, the integrals w.r.t $v,u$ can be approximated through \emph{Triangular quadratures}. Indeed, along each dimension, if $v_{i}\in I_{k_{i}} $, $v_{i}-u_{i}\in I_{\bar{k}_{i}}$, then
$(v_{i}, u_{i})$ will form a parallelogram which can be divided into two triangles. See Figure \ref{fig:vu_parallelogram}.

\begin{figure}[!htb]
  \centering
  \includegraphics[width=50mm]{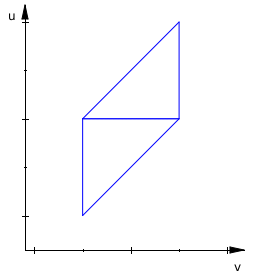}\\
  \caption{Along each dimension, $(v_{i}, u_{i})$ forms two right triangles}\label{fig:vu_parallelogram}
\end{figure}

The integrals on the sphere take the most efforts, because one has to figure out how the Cartesian cubes intersect with the spheres. Let's extract the angular integrals in (\ref{G_entry}), denoted by $g_{m,i}(v,u)$, and study it separately
\begin{equation}\label{AngInt_g}
\mathfrak{g}_{m,i}(v,u):=\int_{\mathbb{S}^{d-1}}\left(\phi_{i}(v')\chi_{m}(v') -\phi_{i}(v)\chi_{m}(v)\right)b(\frac{u\cdot \sigma}{|u|})d\sigma
\end{equation}
The treatments for (\ref{AngInt_g}) follows exactly the same as in the work \cite{DGBE_ChenglongGamba} where deterministic DG solvers for nonlinear Boltzmann equations are developed. Here, we
restate them below.

For any fixed $v,u$, the post-collisional velocity $v'$ will be on the surface of a ball centered at $v-\frac{u}{2}$ with radius $\frac{|u|}{2}$.
The angular cross-section $b(\cos\theta)$ itself may contain non-integrable singularity at $\theta=0$. However, the ``gain-loss" terms in the above square bracket will absorb
the singularity in $b(\cos\theta)$ and make it integrable. Our scheme has to take this issue into account and design a careful way of computing.

\begin{itemize}
  \item[1.]\    Integrable $b(\cos\theta)$.

This case allows to split the ``gain'' and ``loss'' terms. Only ``gain'' terms involve post-collisional velocity $v'$ and can be studied separately.

For $d=2$, the angular integrals (\ref{AngInt_g}) can be evaluated analytically. Indeed, for fixed $v,u$, the regions over the cycle $\sigma=(\sin\theta, \cos\theta)$ such that $v'=v+0.5(|u|\sigma-u)\in E_{m}$
can be exactly figured out, by solving a system of trigonometric inequalities
\begin{equation}
 \left\{ \begin{aligned}
         v_{1}+0.5(|u|\sin\theta-u_{1}) &\in I_{m_{1}} \\
         v_{2}+0.5(|u|\cos\theta-u_{2}) &\in I_{m_{2}}
\end{aligned} \right.
\end{equation}
We have built a programmable routine of deriving all possible overlapped intervals of $\theta$.

The case $d=3$ performs similarly. We solve the following nonlinear trigonometric inequalities
\begin{equation}
 \left\{ \begin{aligned}
 v_{1}+0.5(|u|\sin\theta\cos\varphi-u_{1}) &\in I_{m_{1}} \\
 v_{2}+0.5(|u|\sin\theta\sin\varphi-u_{2}) &\in I_{m_{2}} \\
 v_{3}+0.5(|u|\cos\theta-u_{3}) &\in I_{m_{3}}
\end{aligned} \right.
\end{equation}

The third inequality will give a range for the polar angle $\theta$, and all integrals w.r.t $\theta$ will be performed by adaptive quadratures, say, CQUAD in GSL \cite{GSL};
for any fixed $\theta$, the first two inequalities will decide the range of azimuthal angle
$\varphi$ exactly (by invoking the routine mentioned above).

\emph{Note: }The angle $\theta$ above is NOT the one defined in (\ref{angular_cross}).


\item[2.]\  Non-integrable $b(\cos\theta)$.

Consider a local spherical coordinate system with $u$ being the polar direction. Then, consider a transformation which rotates the polar direction back
onto $z$-axis of the Cartesian coordinate system. The orthogonal rotation matrix $A$ can be constructed explicitly

$d=2$:
\begin{equation}\label{RotMat}
A=\frac{1}{|u|}\left(
                 \begin{array}{cc}
                    -u_{2} & u_{1}\\
                     u_{1} & u_{2} \\
                 \end{array}
               \right)
\end{equation}

$d=3$:
\begin{equation}\label{RotMat}
A=\frac{1}{|u|}\left(
                 \begin{array}{ccc}
                   \frac{u_{1}u_{3}}{\sqrt{u^{2}_{1}+u^{2}_{2}}} &  \frac{u_{2}u_{3}}{\sqrt{u^{2}_{1}+u^{2}_{2}}} & -\sqrt{u^{2}_{1}+u^{2}_{2}} \\
                    -\frac{u_{2}|u|}{\sqrt{u^{2}_{1}+u^{2}_{2}}} & \frac{u_{1}|u|}{\sqrt{u^{2}_{1}+u^{2}_{2}}}  & 0\\
                     u_{1} & u_{2} & u_{3} \\
                 \end{array}
               \right)
\end{equation}
where we assume $u^{2}_{1}+u^{2}_{2}\neq 0$, otherwise, the rotation matrix is reduced to the identity matrix.

Then, consider a change of variable $\sigma\leftarrow A^{-1}\sigma=A^{T}\sigma$, for which the Jocobian is $1$. If denote by $\theta$ the angle between $u$ and $\sigma$ , as exactly defined in (\ref{angular_cross}),
recalling post-collisional velocity $v'=v+\frac{1}{2}(|u|\sigma-u)$, we have
\begin{equation}\label{gmi}
\begin{split}
\mathfrak{g}_{m,i}(v,u)&=\int_{\mathbb{S}^{d-1}}\left[\phi_{i}\circ\chi_{m}(v+z)-\phi_{i}\circ\chi_{m}(v)\right]b(\cos \theta)d\sigma \nonumber \\
&=\int_{\mathbb{S}^{d-1}}\left[\phi_{i}\circ\chi_{m}(v-\frac{u}{2}+\frac{|u|}{2}\sigma)-\phi_{i}\circ\chi_{m}(v)\right]b(\cos \theta)d\sigma
\end{split}
\end{equation}
where, if $d=2$: $z=\frac{|u|}{2}A^{T}\left(\sin\theta, \cos\theta-1 \right)^{T}\!,$ $\sigma=A^{T}\left(\sin\theta, \cos\theta \right)^{T}$.

If $d=3$: the variable  $z$ and $\sigma$ are written in spherical coordinate system, given by $z= \! \frac12 (|u|\sigma-u) \!=\! \frac{|u|}{2}A^{T}\left(\sin \theta \cos \varphi, \sin \theta \sin \varphi, \cos \theta -1\right)^{T}$ and
the scattering direction $\sigma=A^{T}\left(\sin \theta \cos \varphi, \sin \theta \sin \varphi,  \cos \theta\right)^{T}$.

We take $d=3$ for example. The whole domain of $(\theta, \varphi)$, i.e. the sphere, can be divided into the following four subdomains: (1) $S_{1}=[0, \theta_{0}]\times [0, 2\pi]$;
(2) $S_{2}=[\theta_{0}, \theta_{1}]\times I_{\varphi}(\theta)$; (3) $S_{3}=[\theta_{0}, \theta_{1}]\times \left([0,2\pi]\setminus I_{\varphi}(\theta) \right)$;
and (4) $S_{4}=[\theta_{1}, \pi]\times [0,2\pi]$. Here $\theta_{0}$ is determined according to the following policy: when $v\in E_{m}$, $\sin \frac{\theta_{0}}{2} = \min(1, \frac{1}{|u|}\text{dist}(v, \partial E_{m}))$ by noticing that $|z|=|u|\sin\frac{\theta}{2}$; when $v \notin E_{m}$,
$\theta_{0}$ is the smallest possible $\theta$ such that $v'$ lies in $E_{m}$. $\theta_{1}$ is
the largest possible $\theta$ such that $v'$ lies in $E_{m}$. $I_{\varphi}(\theta)$ are effective intervals for $\varphi$, depending on $\theta$, such that
$v'$ lies in $E_{m}$.

Due to the characteristic functions in the integrands of $g_{m,i}(v,u)$ (\ref{gmi}), we have the following four cases
\begin{itemize}
  \item[(a)]\  `0-0': when $v'\notin E_{m}$ and $v\notin E_{m}$. It's trivial because it contributes nothing to the final weight matrix.
  \item[(b)]\  `1-0': when $v'\in E_{m}$ but $v\notin E_{m}$. In this case, the effective domain (where $g_{m,i}(v,u) \neq 0$) is $(\theta, \varphi)\in S_{2}$
  \begin{equation*}
  \mathfrak{g}_{m,i}(v,u) = \int_{S_{2}}\phi_{i}(v')b(\cos\theta)\sin\theta d\theta d\varphi
  \end{equation*}
  \item[(c)]\ `0-1': when $v'\notin E_{m}$ but $v\in E_{m}$. In this case, the effective domain is $(\theta, \varphi)\in S_{3}\cup S_{4}$.
   \begin{equation*}
  \mathfrak{g}_{m,i}(v,u) = -\int_{S_{3}\cup S_{4}}\phi_{i}(v)b(\cos\theta)\sin\theta d\theta d\varphi
  \end{equation*}
  \item[(d)]\  `1-1': when $v'\in E_{m}$ and $v\in E_{m}$. In this case, the effective domain is $(\theta, \varphi)\in S_{1}\cup S_{2}$.
  \begin{equation*}
  \mathfrak{g}_{m,i}(v,u)=\int_{S_{1}\cup S_{2}}\left[\phi_{i}(v')-\phi_{i}(v)\right]b(\cos \theta)\sin\theta  d\varphi d\theta
  \end{equation*}
    \end{itemize}
  
  We have to pay special attention to integrals over $S_{1}$, where the singularity is absorbed.
  Recall $\phi_{i}(v)$ are polynomial basis locally defined on each element $E_{m}$ (if it's piecewise constants, then this case becomes trivial) and $v'=v+z$.
  Since $z\sim 0$, we take the Taylor expansion of $\phi_{i}(v')$ around $v$,
  \begin{equation*}
  \phi_{i}(v')-\phi_{i}(v) = \nabla \phi_{i}(v) \cdot z + \frac{1}{2}z^{T}\nabla^{2}\phi_{i}(v)z + O(|z|^{3}).
  \end{equation*}

  So, it's not hard to observe that, for terms with lowest power of $\sin\theta$,  the azimuthal angle $\varphi$ will be integrated out and leaves only powers of $1-\cos\theta$, which will help cancel the
  singularity in $b(\cos\theta)$. That is,
  \begin{equation}
  \begin{split}
  &\int^{\theta_{0}}_{0}\int^{2\pi}_{0}\left[\phi_{i}(v')-\phi_{i}(v)\right]b(\cos \theta)\sin\theta  d\varphi d\theta\\
  &\leq C\int^{\theta_{0}}_{0} (1-\cos\theta)\sin^{-2-\alpha} \frac{\theta}{2} \sin \theta  d\varphi d\theta \nonumber \\
  &\leq C\int^{t_{0}}_{0} t^{1-\alpha} dt \quad (\text{ change } t=\sin\frac{\theta}{2}, \quad t_{0}=\sin\frac{\theta_{0}}{2}) \nonumber \\
  &= \frac{C}{2-\alpha}t^{2-\alpha}_{0} \quad (\text{ notice } \alpha<2)
  \end{split}
  \end{equation}

In practice, the sets $S_{1}$ and $S_{2}$ can be combined. The outer integration w.r.t the polar angle $\theta$ is performed using adaptive quadratures , say CQUAD in GSL \cite{GSL},
and the inner integration
w.r.t $\varphi$ is done analytically by calling a similar routine that derives all possible intervals of $\varphi$.

\emph{Remark. }In practice, the above routine can be only applied to the case when $v,v'$ fall onto the same mesh element (when collision is almost grazing); for other cases, the angular cross-sections can be regarded as integrable (far away from grazing collisions) and thus can call routines in ``Integrable $b(\cos\theta)$".

\end{itemize}

Once $\mathfrak{g}_{m,i}(v,u)$ is done, plugging it back into (\ref{G_entry}), we get the ``collision matrix'' $\mathbf{G}$.

Finally, we would like to mention that, specially for the Grad splitting formulations, the block $\mathbf{G}(k,m)$ can be written out immediately, from
(\ref{eqn:GradSplitting}),
\begin{equation}\begin{split}
\mathbf{G}(k,m) &= (\text{Diagonal block})\int_{E_{k}}\nu(v)\Phi(v)\otimes\Phi(v)dv \\
&\ + \int_{E_{k}}\int_{E_{m}}\left(k_{1}(v,\xi)-k_{2}(v,\xi)\right)\Phi(v)\otimes\Phi(\xi)dvd\xi
\end{split}\end{equation}
which results in a symmetric semi-positive definite collision matrix $\mathbf{G}$.

\subsubsection{The Approximate Rayleigh Quotient}

Recall the equivalent minimization problem for solving spectral gaps in (\ref{eqn:Minimization}) or (\ref{eqn:Minimization_g}).
With the approximation above, we can easily rewrite this constrained minimization problem as
\begin{equation}\label{Rayleigh}
\begin{split}
& \min \quad \frac{\mathbf{u}^{T}\mathbf{G}\mathbf{u}}{\mathbf{u}^{T}\mathbf{D}\mathbf{u}} \\
& s.t \quad \mathbf{C}\mathbf{u}=\mathbf{0}
\end{split}
\end{equation}
where the block diagonal matrix $\mathbf{D}$ generated from the tensor product of the basis functions;  the constraint matrix $\mathbf{C}$ is of size $(d+2)\times M$ (here $M=N(p+1)^{d}$ is the number of coefficients), obtained from the constraints.
\begin{equation}\label{eqn:constraint}
\int F(v)\mu^{\frac{1}{2}}(v)dv = \int F(v)\mu^{\frac{1}{2}}(v)vdv = \int F(v)\mu^{\frac{1}{2}}(v)|v|^{2}dv =0
\end{equation}

We need to find the global optimization solution. To do this, we first find an orthogonal basis $\mathbf{P}$ for the constraint space
\begin{equation}\label{eqn:constraintMatrix}
\mathcal{P}=\lbrace \mathbf{u}\in \mathbb{R}^{M}:\mathbf{Cu}=0 \rbrace
\end{equation}
This can be done through performing $QR$ factorization for $\mathbf{C}^{T}$, the last $M-(d+2)$ columns will form the orthogonal (actually, orthonormal) basis $\mathbf{P}$, of size $M \times (M-(d+2))$ and $\mathbf{P}^{T}\mathbf{P}=\mathbf{I}_{M-(d+2)}$

Then, the minimization problem becomes
\begin{equation}\label{eqn:generalized_eigen0}
    \min_{0\neq b \in \mathbb{R}^{M-(d+2)}} \quad \frac{b^{T}\mathbf{P}^{T}\mathbf{GP}b}{b^{T}\mathbf{P}^{T}\mathbf{DP}b}
\end{equation}
which is equivalently to find the smallest singular value from the generalized eigenvalue problem
\begin{equation}\label{eqn:generalized_eigen1}
\mathbf{P}^{T}\mathbf{GP}=\lambda\mathbf{P}^{T}\mathbf{DP}.
\end{equation}

In practice, instead of solving (\ref{eqn:generalized_eigen0}) and (\ref{eqn:generalized_eigen1}) which requires extra $QR$ decomposition and matrix multiplications, we find out another way to force the constraints (\ref{eqn:constraint}), which is much more efficient and easier to implement. This is done by perturbing the ``collision matrix" $\mathbf{G}$ to its ``$L^{2}$-closest" counterpart, through introducing a ``conservation routine". A similar conservation routine has been successfully applied to deterministic conservative solvers for nonlinear Boltzmann equations based on Spectral methods \cite{GT_jcp} as well as Discontinuous Galerkin methods \cite{DGBE_ChenglongGamba}.

Our objective is to force the eigenvalues to be zeros whenever the functions fall onto the null space $\mathcal{N}(L)$ of operator $L$. That is, to force the conservation, we seek for a perturbation of $\mathbf{Q}:=\mathbf{Gu}$, which is the minimizer of the following constrained optimization problem:

\noindent \textbf{Conservation Routine [Discrete Level]:} Find $\mathbf{Q}_{c}$ (the subscript $c$ means a conservative correction), which is the minimizer of the problem
\begin{equation*}
\begin{split}
&\min \frac{1}{2} (\mathbf{Q}_{c}-\mathbf{Q})^{T} \mathbf{D}(\mathbf{Q}_{c}-\mathbf{Q})  \nonumber \\
&\text{s.t. } \quad \mathbf{C}\mathbf{Q}_{c}=\mathbf{0}
\end{split}
\end{equation*}

Due to the orthogonality of the local basis, $\mathbf{D}$ is a positive definite diagonal matrix with its $j$-th entry $\frac{1}{|E_{k}|}\int_{E_{k}} (\phi_{l}(v))^{2}dv$, $j=(p+1)k+l$. For example, in 3D, when $p=0$, $\mathbf{D}$ is reduced to an identity matrix;
while $p=1$, with the orthogonal basis chosen in (\ref{eqn:linearbasis}),
\begin{equation*}
 \mathbf{D} = \text{Diag }(1, \frac{1}{12}, \frac{1}{12}, \frac{1}{12}, 1, \frac{1}{12},\frac{1}{12},\frac{1}{12},1,...) \, .
\end{equation*}

To solve the minimization problem, we employ the Lagrange multiplier method. Denote by $\lambda\in \mathbb{R}^{d+2}$ the multiplier vector. Then the objective function writes
\begin{equation}
\mathcal{L}(\mathbf{Q}_{c}, \lambda)=\frac{1}{2} (\mathbf{Q}_{c}-\mathbf{Q})^{T} \mathbf{D}(\mathbf{Q}_{c}-\mathbf{Q})-\lambda^{T}\mathbf{C}\mathbf{Q}_{c} \, .
\end{equation}
We can solve it by finding the critical value of $\mathcal{L}$ gives
\begin{equation*}
\left\{ \begin{aligned}
         \frac{\partial \mathcal{L}}{\partial \mathbf{Q_{c}}} &= \mathbf{0} \\
         \frac{\partial \mathcal{L}}{\partial \lambda}&=\mathbf{0}
\end{aligned} \right.
\Longrightarrow
\left\{ \begin{aligned}
&\mathbf{Q}_{c}= \mathbf{Q} + \mathbf{D}^{-1}\mathbf{C}^{T}\lambda \\
         &\mathbf{C}\mathbf{Q}_{c}=\mathbf{0}
\end{aligned} \right.
\Longrightarrow
\lambda=-(\mathbf{C}\mathbf{D}^{-1}\mathbf{C}^{T})^{-1}\mathbf{C}\mathbf{Q}
\end{equation*}
(Here, notice that $\mathbf{C}\mathbf{D}^{-1}\mathbf{C}^{T}$ is symmetric and positive definite and hence exists the inverse.)

Thus, we get the minimizer $\mathbf{Q}_{c}$
\begin{equation}
\mathbf{Q}_{c}=[\mathbb{I}d-\mathbf{D}^{-1}\mathbf{C}^{T}(\mathbf{C}\mathbf{D}^{-1}\mathbf{C}^{T})^{-1}\mathbf{C}]\mathbf{Q}\, ,
\end{equation}
where $\mathbb{I}d$ is an identity matrix of size $M\times M$. Obviously, $\mathbf{Q}_{c}$ is a perturbation of $\mathbf{Q}$.
Therefore, finally, the perturbed ``collision matrix" $\mathbf{G}$ will be
\begin{equation}
\mathbf{G}_{c}=[\mathbb{I}d-\mathbf{D}^{-1}\mathbf{C}^{T}(\mathbf{C}\mathbf{D}^{-1}\mathbf{C}^{T})^{-1}\mathbf{C}]\mathbf{G}
\end{equation}
which is forced to have $d+2$ zero eigenvalues whenever $\mathbf{u} \not\in \mathcal{P}$ defined in (\ref{eqn:constraintMatrix}).

The $(d+3)$-rd eigenvalue of $\mathbf{G}_{c}$ will be defined as our numerical spectral gap.

\subsubsection{Convergence of The Approximate Rayleigh Quotient}\label{sec:Conv_Rayleigh}

We will prove that the above discrete Rayleigh quotient (\ref{Rayleigh}) will converge to the spectral gap solved from (\ref{eqn:Minimization}). With standard approximation theory, it is not hard to prove for integrable angular cross-sections that, the above discrete Rayleigh quotient (\ref{Rayleigh}) converges to the spectral gap of the original linearized Boltzmann operator. We summarize the results it in the following theorem.
\begin{theorem}[Convergence of Rayleigh Quotients] \label{thm:Conv_RayleighQuot}
For the angular integrable (i.e. $\alpha<0$ in (\ref{angular_cross})) linearized Boltzmann operator,
defined in the Dirichlet form (\ref{Dirichlet}),  with a piecewise polynomial approximation (\ref{pw_def}) for the perturbation $F(v)$,
the spectral gap, denoted by $\lambda(G)$, solved from minimized Rayleigh quotient (\ref{Rayleigh}) approximates the original spectral gap, denoted by $\lambda(L)$, solved from (\ref{eqn:Minimization}), in the following way,

\begin{itemize}
  \item When $\gamma\geq 0$, $|\lambda(L)-\lambda(G)|\lesssim h^{k+1}$ \, ;
  \item When $-d<\gamma<0$, $|\lambda(L)-\lambda(G)|\lesssim h^{k+1}+ e^{-\frac{V^{2}}{2}}$ \, ,
\end{itemize}
where $h=\max_{E\in\mathcal{T}_{h}}\text{diam}(E)$ is the mesh size of the regular triangulation, $k$ is the total degree of polynomials in the piecewise polynomial space $\mathbf{P}^{k}$ and $V$ is the lateral size of the computational domain.
The ``$\lesssim$'' is only upto some constant depending on the truncated computational domain $\Omega=[-V,V]^{d}$ as well as eigenfunctions associated with the spectral gap eigenvalue.
\end{theorem}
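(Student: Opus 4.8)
The plan is to split the total error into a \emph{truncation} error coming from replacing $\mathbb{R}^{d}$ by $\Omega=[-V,V]^{d}$ and a \emph{discretization} error coming from the DG projection, and to estimate each using the variational characterization of the gap. Write $\lambda(L)$ for the infinite--domain minimum in \eqref{eqn:Minimization}, $\lambda(L_{\Omega})$ for the minimum of the same Rayleigh quotient taken over $F$ supported in $\Omega$ with $F\perp\mathcal{N}(L)$, and $\lambda(G)$ for the fully discrete minimum in \eqref{Rayleigh}. Then
\[
|\lambda(L)-\lambda(G)|\le |\lambda(L)-\lambda(L_{\Omega})|+|\lambda(L_{\Omega})-\lambda(G)|,
\]
and I would bound the two terms on the right separately, since they are governed by different mechanisms ($V$ and $h$ respectively).

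For the discretization term the key observation is that the Dirichlet form \eqref{Dirichlet}, written in terms of $g=F/\mu^{1/2}$, contains no derivatives of $g$, so the piecewise--polynomial space \eqref{pw_def} (together with the exact integral constraints \eqref{eqn:constraint}) is a \emph{conforming} subspace of the admissible set for $L_{\Omega}$. Consequently $\lambda(G)\ge\lambda(L_{\Omega})$ automatically, because minimizing a Rayleigh quotient over a subspace can only raise its value. For the matching upper bound I would insert the interpolant $I_{h}g_{\ast}$ of the gap eigenfunction $g_{\ast}$ of $L_{\Omega}$ into \eqref{Rayleigh}, after applying the conservation routine so that $I_{h}g_{\ast}$ satisfies $\mathbf{C}\mathbf{u}=\mathbf{0}$ exactly; since $g_{\ast}\perp\mathcal{N}(L)$, this correction is itself $O(h^{k+1})$ and does not degrade the order. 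The standard Rayleigh--quotient perturbation estimate then gives $0\le\lambda(G)-\lambda(L_{\Omega})\lesssim \|g_{\ast}-I_{h}g_{\ast}\|^{2}$ in the energy norm, and the elementary polynomial interpolation bound $\|g_{\ast}-I_{h}g_{\ast}\|\lesssim h^{k+1}|g_{\ast}|_{H^{k+1}}$, combined with the consistency error of the quadrature--assembled matrix $\mathbf{G}$, produces the stated $h^{k+1}$. A preliminary step here is to establish the regularity $g_{\ast}\in H^{k+1}$: from the splitting \eqref{L_split} one has $g_{\ast}=(\lambda-\nu)^{-1}\mathbf{K}g_{\ast}$ below the essential spectrum, and since $\mathbf{K}$ has a smooth (Gaussian) kernel while $\nu$ is smooth and bounded away from $\lambda$, the eigenfunction is smooth with Gaussian decay, which both supplies the interpolation regularity and controls the tails below.

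The truncation term is where the two regimes diverge. For $\gamma\ge 0$ the collision frequency \eqref{eqn:collisionfrq} is bounded below, $\nu\ge\nu_{0}>0$, so the gap is an isolated eigenvalue below the essential spectrum, and its eigenfunction decays like $\mu^{1/2}$ times a polynomial; restricting it to $\Omega$ perturbs the numerator and denominator of the quotient only through the Gaussian tail $\int_{|v|>V}\mu\sim e^{-V^{2}/2}$, which is exponentially small and therefore absorbed into the $h^{k+1}$ term, leaving no surviving truncation contribution. For $-d<\gamma<0$ there is no true gap and $\lambda(L)=0$, because $\nu(v)\to 0$ as $|v|\to\infty$ drags the essential spectrum to the origin; here $\lambda(G)\ge 0=\lambda(L)$ is a one--sided error, and I would exhibit an explicit admissible test function concentrated near $\partial\Omega$ — built from a collision invariant of \eqref{coll_invariant} multiplied by $\mu^{1/2}$, truncated and projected off $\mathcal{N}(L)$ — whose Rayleigh quotient is controlled by the Maxwellian mass near the boundary, hence $\lesssim e^{-V^{2}/2}$. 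Inserting its interpolant into \eqref{Rayleigh} and adding the $O(h^{k+1})$ discretization error yields $\lambda(G)\lesssim h^{k+1}+e^{-V^{2}/2}$, consistent with the absence of a genuine gap.

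The main obstacle I expect is precisely the soft--potential truncation estimate. The naive Weyl--sequence bound, using a boundary--concentrated state, gives only an algebraically small gap $\min_{\Omega}\nu\sim V^{\gamma}$, whereas the claimed rate is the exponentially small $e^{-V^{2}/2}$; recovering the latter hinges on how the Maxwellian weight enters the mass normalization of the discrete quotient \eqref{Rayleigh} and on exploiting the cancellation between the $\nu$ and $\mathbf{K}$ parts of \eqref{L_split}, all while preserving exact orthogonality to the $d+2$ constraints after the conservation correction. A secondary technical point is quantifying the consistency error of the quadrature--assembled form $\mathbf{G}$ (and of the mass matrix in the denominator) so that it stays $O(h^{k+1})$; given the analytic and adaptive evaluation of the angular integrals described above this should be routine, but it must be verified so as not to spoil the rate.
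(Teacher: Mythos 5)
Your decomposition into a truncation error plus a discretization error is exactly the paper's strategy (the paper works with the null-space-augmented operator $\bar{L}F=LF+\sum_{i}\phi_{i}(F,\phi_{i})$ and its cutoff $\bar{L}_{\Omega}=\chi_{\Omega}\bar{L}$, but the triangle inequality is the same), and your discretization half is sound: since the Dirichlet form (\ref{Dirichlet}) contains no derivatives, the DG space with the exact constraints is conforming, the min--max argument gives $\lambda(G)\geq\lambda(L_{\Omega})$, and inserting an interpolant of the gap eigenfunction yields the $h^{k+1}$ rate. Your regularity step $g_{*}=(\lambda-\nu)^{-1}\mathbf{K}g_{*}$ is a genuine improvement on the paper, which silently assumes $\|F\|_{H^{k+1}}<\infty$ for the eigenfunction; the paper instead compares $\langle \bar{L}_{\Omega}F,F\rangle$ with $\langle \bar{L}_{\Omega}P_{h}F,P_{h}F\rangle$ through an $L^{2}$ bound on the kernel and reaches the same $h^{k+1}$.

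The genuine gap is the soft-potential truncation estimate, and as the proposal stands it is fatal. The test function you propose --- a collision invariant times $\mu^{1/2}$, truncated, concentrated near $\partial\Omega$, and projected off $\mathcal{N}(L)$ --- cannot produce the bound $\lesssim e^{-V^{2}/2}$: for any admissible $F$ concentrated where $|v|\sim V$, the numerator of the quotient is dominated by $\int\nu F^{2}\geq(\min_{\Omega}\nu)\|F\|^{2}$ (the compact part $\mathbf{K}$ contributes only an exponentially small correction on such functions, by the Gaussian decay of its kernel), and the Maxwellian weight cancels between numerator and denominator, so the quotient is of size $\nu(V)\sim V^{\gamma}$, which for $\gamma\in(-d,0)$ is far larger than $e^{-V^{2}/2}$. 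You in fact concede this in your closing paragraph (``the naive Weyl-sequence bound \ldots gives only $V^{\gamma}$'') and defer the recovery of $e^{-V^{2}/2}$ to an unspecified cancellation between the $\nu$ and $\mathbf{K}$ parts of (\ref{L_split}); that cancellation is never exhibited, so the second bullet of the theorem is not proven. For comparison, the paper handles this step by a different route: it identifies the bottom of the essential spectrum of the cutoff operator with $\min_{\Omega}\nu$ and bounds the numerical gap by it --- a route which, note, also only delivers gap $\lesssim\min_{\Omega}\nu$, so your diagnosis points at precisely the weakest link of the paper's own argument. A secondary flaw: for $\gamma\geq 0$ your residual truncation error $e^{-V^{2}/2}$ cannot be ``absorbed into the $h^{k+1}$ term,'' since $V$ and $h$ are independent parameters and the first bullet must hold as $h\to 0$ at fixed $\Omega$; the paper gets the pure $h^{k+1}$ bound only by asserting, via Weyl's theorem, that the cutoff leaves the hard-potential gap unchanged, $\lambda(\bar{L})=\lambda(\bar{L}_{\Omega})$, an identity your Gaussian-tail estimate does not reproduce.
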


\begin{proof}

As shown in the Dirichlet form (\ref{Dirichlet}) of $L$, the eigenvalue zero is corresponding to the conservation laws for mass, momentum and kinetic energy.
Therefore, it is of multiplicity $d+2$, with eigenfunctions $\phi_{0}(v)=\mu^{1/2}(v)$, $\phi_{i}(v)=\mu^{1/2}(v)v_{i}$ for $i=1,..,d$ and $\phi_{d+1}(v)=\mu^{1/2}(v)|v|^{2}$.

Suppose the truncated velocity domain $\Omega=[-V,V]^{d}$ is large enough. We are indeed dealing with the cutoff operator $L_{\Omega}=\chi_{\Omega}L$ applying to $\chi_{\Omega}(v)F(v)$.
That is, the kernel, denoted by $k_{\Omega}$, for cutoff $L_{\Omega}$ is given by
\begin{equation}\label{cutL}
k_{\Omega}=\chi_{\Omega}(v)\nu(v)\delta(v-\xi) + \chi_{\Omega}(v)k(v,\xi) \, ,
\end{equation}
where $\delta(v-\xi)$ is short for $\delta(v_{1}-\xi_{1})\cdot\cdot\cdot\delta(v_{d}-\xi_{d})$, $\nu(v)$ is the collision frequency defined in (\ref{eqn:collisionfrq}) and $k(v,\xi)$ is the kernel for the compact operator $\mathbf{K}$ in (\ref{def_K}).

However, the null space $\cal{N}(L)$ is not invariant under the cutoff. Nevertheless, since $\cal{N}(L)$ is spanned by collision invariants weighted with a Gaussian distribution,
as long as $\Omega$ is large enough, the approximation error due to cutoff can be negligible.
To save trouble on dealing with null space, we consider the modified linear operator $\bar{L}$, with the null space of $L$ removed
\begin{equation}
\bar{L}F=LF+\sum^{d+1}_{i=0}\phi_{i}(F, \phi_{i}) \, ,
\end{equation}
where $(F, \phi_{i})=\int_{\mathbb{R}^{d}} F(v)\phi_{i}(v)dv$. This is to replace the integral kernel $k(v,\xi)$ by
\begin{equation}
\bar{k}(v,\xi)=k(v,\xi)+\sum^{d+1}_{i=0}\phi_{i}(v)\phi_{i}(\xi) \, ,
\end{equation}
which is still $L^{2}(\mathbb{R}^{d})$ integrable. That is, $\bar{L}$ can be still written as collision frequency $\nu(v)$ plus a compact perturbation.

Thus, the minimum Rayleigh quotient of $\bar{L}$ is the expected spectral gap, if exists. That is, $\lambda(L)=\lambda(\bar{L})$.
So, we only need to study the approximations for the Rayleigh quotient of operator $\bar{L}$.

Similarly, we are working with the cutoff operator $\bar{L}_{\Omega}=\chi_{\Omega}\bar{L}$ applying to $\chi_{\Omega}(v)F(v)$.
That is, the kernel $\bar{k}_{\Omega}$ for cutoff $\bar{L}_{\Omega}$ is given by
\begin{equation}\label{cutL}
\bar{k}_{\Omega}=\chi_{\Omega}(v)\nu(v)\delta(v-\xi) + \chi_{\Omega}(v)\bar{k}(v,\xi) \, .
\end{equation}

According to Weyl's theorem, for $\gamma\geq 0$, the spectral gap for the new $\bar{L}$ still exists.
And in the case, the cutoff doesn't change
the minimum of the Rayleigh quotient of $\bar{L}$. So, the spectral gap stays the same, or $\lambda(\bar{L})=\lambda(\bar{L}_{\Omega})$.

While for the case $-d<\gamma<0$,
\begin{equation}
\min_{v\in\Omega}\nu(v) \gtrsim e^{-\frac{V^{2}}{2}}\, ,
\end{equation}
which is the lower bound for the continuum spectrum of $\bar{L}_{\Omega}$. This implies, the spectral gap for the cutoff operator $\bar{L}_{\Omega}$ is no larger than $e^{-\frac{V^{2}}{2}}$ (up to some constant factor), if ever exists.
That is, $|\lambda(\bar{L})-\lambda(\bar{L}_{\Omega})|\lesssim e^{-\frac{V^{2}}{2}}$.

Suppose $\mathcal{T}_{h}$ is a regular Cartesian partition for $\Omega$, with mesh size $h=\max_{E\in \mathcal{T}_h{}}\text{diam}(E)$.
We define the standard $d$-dimensional $L^{2}$ projection $P_{h}: f \mapsto P_{h}f$ by
\begin{equation}\label{L2Proj}
\int_{E}P_{h}f(v)\phi(v)dv = \int_{E}f(v)\phi(v)dv, \quad \forall \phi\in \mathbf{P}^{l}|_{E}
\end{equation}

By Poincare's inequality and Sobolev embedding theorems, we can prove the following approximation theory
\begin{equation}\label{appr_theory}
\begin{split}
&\|f-P_{h}f\|_{L^{2}(\mathcal{T}_{h})}\lesssim h^{q+1}\|f\|_{H^{q+1}(\Omega)}, \quad \forall f\in H^{q+1}(\Omega)\, \nonumber \\
&\|P_{h}f\|_{L^{p}(\mathcal{T}_{h})}\lesssim \|f\|_{L^{p}(\Omega)}, \quad \forall f\in L^{p}(\Omega), \quad 1\leq p \leq \infty
\end{split}
\end{equation}
where $L^{p}$ and $H^{q+1}$ are usual Sobolev spaces and Hilbert spaces, respectively.

For any mesh elements $E_{v}$ and $E_{\xi}$, according to the approximation theories (\ref{appr_theory}), it's not hard to prove the following
\begin{equation}
\|F(v)F(\xi)-P_{h}F(v)P_{h}F(\xi)\|_{L^{2}(E_{v}\times E_{\xi})} \leq h^{k+1}\left(\|F\|_{H^{k+1}(E_{v})}\|F\|_{H^{k+1}(E_{\xi})}\right) \, ,
\end{equation}
where $P_{h}F$ is the $L^{2}$ projection defined in (\ref{L2Proj}).

Then, the Dirichlet form is approximated as follows
\begin{equation}
\begin{split}
&|\langle \bar{L}_{\Omega}F, F\rangle - \langle \bar{L}_{\Omega}(P_{h}F), (P_{h}F)\rangle|  \\
&\leq \sum_{m}\sum_{n}\|\bar{k}_{\Omega}\|_{L^{2}(E_{m}\times E_{n})}\|F(v)F(\xi)-P_{h}F(v)P_{h}F(\xi)\|_{L^{2}(E_{m}\times E_{n})}  \\
&\leq C(\Omega)h^{k+1}\|F\|^{2}_{H^{k+1}(\mathcal{T}_{h})} \, ,
\end{split}
\end{equation}
where $C(\Omega)$ is some constant depending on the truncated domain $\Omega$.

And thus, the Rayleigh quotients have the following estimates
\begin{equation}
\begin{split}
&\left|\frac{\langle \bar{L}_{\Omega}F, F\rangle}{\|F\|^{2}_{L^{2}(\Omega)}} - \frac{\langle \bar{L}_{\Omega}(P_{h}F), (P_{h}F)\rangle}{\|P_{h}F\|^{2}_{L^{2}(\mathcal{T}_{h})}} \right| \\
&=\frac{1}{\|F\|^{2}_{L^{2}(\Omega)}\|P_{h}F\|^{2}_{L^{2}(\mathcal{T}_{h})}}\big(\langle \bar{L}_{\Omega}F, F\rangle\big( \|P_{h}F\|^{2}_{L^{2}(\mathcal{T}_{h})} - \|F\|^{2}_{L^{2}(\Omega)} \big) \\
&\quad + \|F\|^{2}_{L^{2}(\Omega)} \left( \langle \bar{L}_{\Omega}F, F\rangle - \langle \bar{L}_{\Omega}(P_{h}F), (P_{h}F)\rangle \right)  \big) \\
&\leq C(\Omega)h^{k+1}\, ,
\end{split}
\end{equation}
which implies,
\begin{equation}
|\lambda(\bar{L}_{\Omega}) - \lambda(G)| \leq C(\Omega)h^{k+1}\, ,
\end{equation}
where now the generic constant $C(\Omega)$ also depends on the eigenfunction associated with the spectral gap eigenvalue.

Finally, noticing
\begin{equation}
|\lambda(L)-\lambda(G)| \leq |\lambda(\bar{L})-\lambda(\bar{L}_{\Omega})| + |\lambda(\bar{L}_{\Omega}) - \lambda(G)|\, 
\end{equation}
gives our final estimates.
 
\end{proof}
The convergence of Rayleigh Quotients for non-integrable angular cross-sections are more subtle. In the following session, we provide numerical results that may indicate the necessary part of Conjecture \ref{conj:MouhotStrain} is true, i.e. there is no spectral gap if $gamma+alpha < 0$, which is consistent with what Gressman \& Strain \cite{GressmanStrain} theoretically proved.

\section{Numerical Results}\label{sec:NumericalResults}

In this section, we will present some results for $2d$ as well as $3d$ linearized Boltzmann operators with integrable angular cross-sections.

The computing of weight matrix $G$ is parallelized with MPI \cite{MPI}. The matrix will be computed and stored in a way of two-dimensional block cyclic distribution \cite{Scalapack},
on a process grid, as shown in Figure \ref{fig:2d_block_cyclic}

\begin{figure}[!htb]
  \centering
%

\includegraphics[width=70mm]{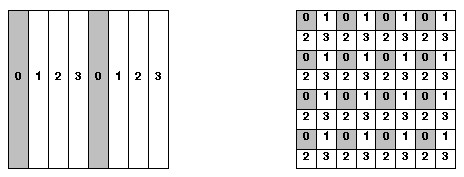}\\
\caption{The 1d block-cyclic column- and 2d block-cyclic distributions}\label{fig:2d_block_cyclic}
\end{figure}

Some scalable eigensolvers in ScaLAPACK, for example, PDSYGVX and PDSYEVX \cite{Scalapack},  are called to compute the eigenvalues for the distributed matrix.

At first, we would like to interpret the relationship between our numerical results and the true spectral gaps.
Due to the domain truncation and DG approximation, the numerical results may not represent the true spectral gaps; however, the convergence
Theorem \ref{thm:Conv_RayleighQuot} for the approximate Raleigh quotients in Section \ref{sec:Conv_Rayleigh} tells us that, if there exists a spectral gap for the true problem, then as long as
the domain is truncated large enough, what matters will be only the DG scheme approximation accuracy. And if there is no spectral gap, then as computing
domain gets larger, the numerical ``spectral gap" will clearly decay down to zero. This is exactly what Figure \ref{fig:sg2d_maxwell} and Figure \ref{fig:sg2d_soft}
are showing.

\begin{figure}[!htb]
\begin{minipage}[t]{0.45\linewidth}
\centering
\includegraphics[width=70mm]{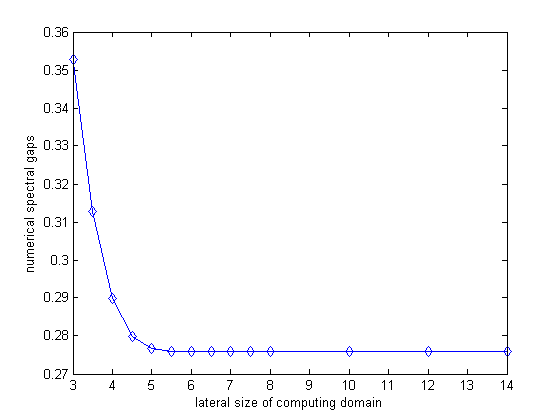}
\caption{The numerical spectral gaps for 2d Maxwell type model, i.e. $\gamma=0, \alpha=-1$}\label{fig:sg2d_maxwell}
\end{minipage}
\hfill
\begin{minipage}[t]{0.45\linewidth}
\centering
\includegraphics[width=70mm]{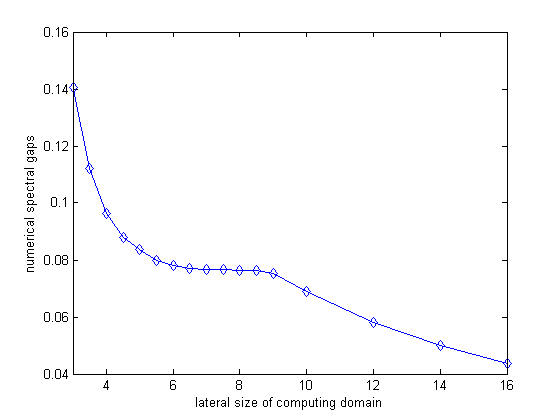}
\caption{The numerical spectral gaps for 2d, $\gamma=-1, \alpha=-1$}\label{fig:sg2d_soft}
\end{minipage}
\end{figure}

\emph{Note: }When increasing the lateral size of the truncated velocity domain, we keep the mesh size to be consistent (say, in our tests, $\Delta v$=0.5), for sake of comparison. For the case of soft potential,
as shown in Figure \ref{fig:sg2d_soft} for $\gamma=-1$, some ``pseudo spectral gap'' in the numerical results might be observed, for example in the segment $V\in[7,9]$; but such ``pseudo spectral gap'' will break immediately when increasing the domain size.

Then, we fix a large enough lateral size, increasing the number of mesh elements on each direction. More accurate results can be expected. We can see from Figure \ref{fig:sg2d_V5maxwell} and Figure \ref{fig:sg3d_V5maxwell},
the numerical values will approach the analytical value $\frac{1}{4}$ (for 2d) and $\frac{1}{3}$ (for 3d) respectively, when finer discretization is applied, as calculating the spectral gap for Maxwell type of interactions ($\gamma=0$),
where the exact eigenvalue for Maxwell-type interactions ($\gamma=0$) is known and given by \cite{ChangUhlenbeck,Cercignani_mathKinetic,Bobylev_linBE}:
\begin{equation*}
\lambda_{nl}=\int_{S^{d-1}}b(\cos(\theta))\left[\cos^{2n+1}\frac{\theta}{2}P_{l}(\cos(\frac{\theta}{2}))+\sin^{2n+1}\frac{\theta}{2}P_{l}(\sin\frac{\theta}{2})-1-\delta_{l0}\delta_{n0}\right] \, ,
\end{equation*}
where $P_{l}(x)$ is the $l$-th Legendre polynomial; $n$, $l$=0,1,.... Please note, when $d=2$, the above analytical calculation for spectral gap is only valid for constant angular cross-section $b$. 

\begin{figure}[!htb]
\begin{minipage}[t]{0.45\linewidth}
\centering
\includegraphics[width=70mm]{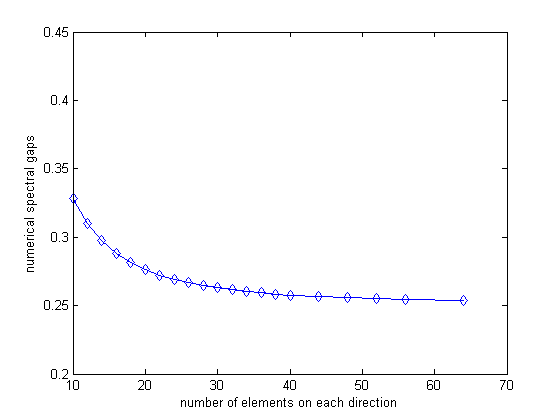}
\caption{The numerical spectral gaps with $\Omega_{v}=[-5,5]^{2}$ for 2d Maxwell type model, i.e. $\gamma=0, \alpha=-1$}\label{fig:sg2d_V5maxwell}
\end{minipage}
\hfill
\begin{minipage}[t]{0.45\linewidth}
\centering
\includegraphics[width=70mm]{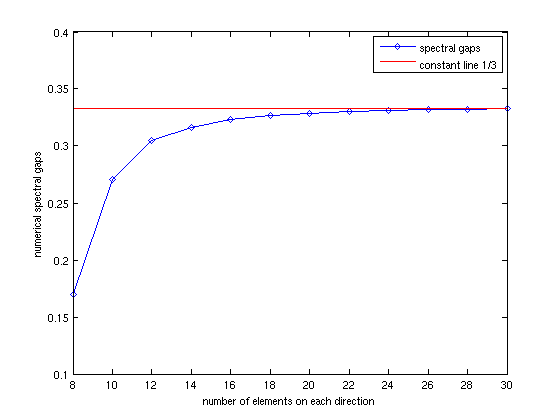}
\caption{The numerical spectral gap with $\Omega_{v}=[-5,5]^{3}$ for 3d Maxwell type model, $\gamma=0, \alpha=-2$}\label{fig:sg3d_V5maxwell}
\end{minipage}
\end{figure}

In particular, by actually solving the nonlinear Boltzmann equation and plotting the evolution of the weighted $L^{2}$ norm of the solution,
we can expect an exponential decay rate governed by or close to the spectral gap.
With the same DG discretization, the numerical value of the corresponding spectral gap for $\gamma=1$ (hars sphere) is 0.72.
The numerical solutions for the corresponding nonlinear BE is obtained by conservative DG solver developed also by the authors, see Chapter 3. See Figure \ref{fig:sg_exp_decay}.
\begin{figure}[!htb]
\centering
  \includegraphics[width=80mm]{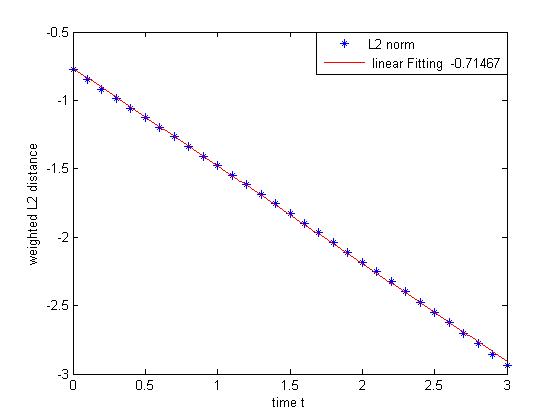}\\
  \caption{The exponential decay for solutions of 2d nonlinear Boltzmann equation with $\gamma=1, \alpha=-1$}\label{fig:sg_exp_decay}
\end{figure}

\emph{Remark. }This can only be expected after long time or with an initial state very close to equilibrium, because the spectral gap, as the first non-zero eigenvalue, can only dominate the decay rate when time $t$ is large enough.

We have computed spectral gaps for 2d variable hard potentials with isotropic angular cross-sections, using a moderate domain discretization (piecewise constant basis functions; $V=5$, $N=24$) . As seen from Table \ref{table:sg2d_vhp}, stronger intermolecular potential will force a faster decay to equilibrium.
\begin{table}[!ht]
\centering
\begin{tabular}{|c| c c c c c c c|}
  \hline
   $\gamma$ & 0 & 0.1 & 0.25 &0.5 & 0.75 & 0.9 & 1 \\
   \hline
  gaps & 0.25 & 0.29 & 0.34 &0.44 & 0.58 &0.67 &0.72 \\
  \hline
\end{tabular}
\caption{Numerical spectral gaps for 2d variable hard potentials with isotropic angular cross-sections}
\label{table:sg2d_vhp}
\end{table}

We also apply piecewise linear basis functions ($P^{1}$ polynomials) for approximating $F(v)$. Table \ref{Comp_sg_p0p1} is the comparison with piecewise constant case.
\begin{table}[!ht]
\centering
\begin{tabular}{|c| c c|}
  \hline
   gap & (V,N)=(5,20) & (V,N)=(5,24) \\
   \hline
  $P^{0}$ & 0.383798 & 0.353494 \\
  $P^{1}$ & 0.351826 & 0.332835 \\

  \hline
\end{tabular}
\caption{Comparisons of numerical spectral gaps between $P^{0}$ and $P^{1}$ basis, for 3d Maxwell model.}
\label{Comp_sg_p0p1}
\end{table}
from which one can easily see the $P^{1}$ basis functions give a much more accurate approximation than $P^{0}$, which is stated in the theorem of convergence.

For the non-cutoff cases, when $\int_{\mathbb{S}^{d-1}}b(\frac{u\cdot\sigma}{|u|})d\sigma$ is unbounded, we also have numerically verified the ``conjecture" on the existence of spectral gaps, i.e. there exists spectral gap if and only if $\gamma+\alpha\geq 0$. Numerical evidence shows that, similar to the cutoff case, the geometry of the spectral gaps for truncated operator $\chi_{\Omega}L$ also depends on the truncation of the domain and the discretization resolution. 
If there exists a spectral gap, as long as the computing velocity domain is large enough, the approximation accuracy only depends on the resolution of the mesh and vice versa;
 otherwise, if there is no spectral gap, with the lateral size getting larger, the numerical spectral gap is expected to decay to zero, and vice versa. See Figure \ref{fig:sg3d_VN_gamma0alpha0} and \ref{fig:sg3d_VN_gamma-1alpha0}. This is an interesting observation, to which we would like to provide some analytical explanations.

\begin{figure}[!htb]
\begin{minipage}[t]{0.45\linewidth}
\centering
\includegraphics[width=70mm]{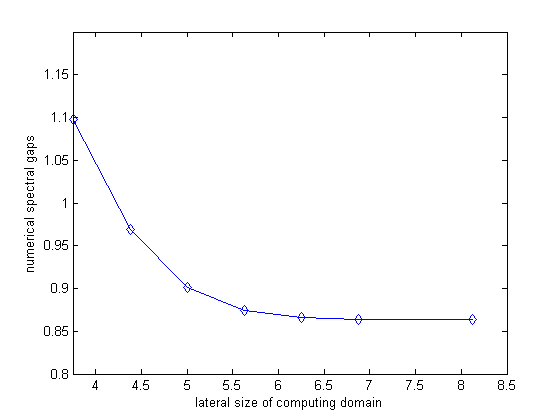}
\caption{The numerical spectral gaps for 3d non-cutoff case, $\gamma=0, \alpha=0$}\label{fig:sg3d_VN_gamma0alpha0}
\end{minipage}
\hfill
\begin{minipage}[t]{0.45\linewidth}
\centering
\includegraphics[width=70mm]{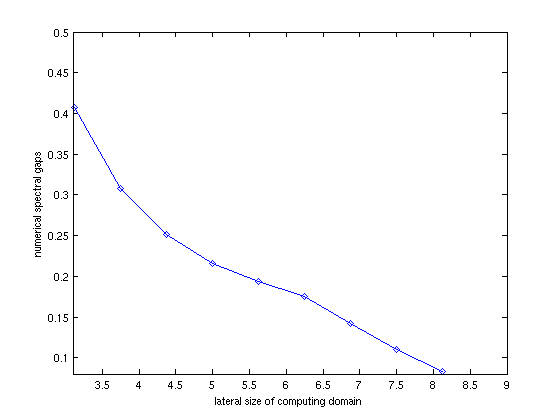}
\caption{The numerical spectral gaps for 3d non-cutoff case, $\gamma=-1, \alpha=0$}\label{fig:sg3d_VN_gamma-1alpha0}
\end{minipage}
\end{figure}

Therefore, once we know there exists a spectral gap, we can fix a large enough truncated velocity domain and apply DG meshes with finer resolutions, then more accurate approximations to the real spectral gap can be expected.
See Figure \ref{fig:sg3d_V5_gamma0alpha0} for the numerical spectral gaps when $\gamma=0, \alpha=0$, where an approximate value 1.0 is achieved.

\begin{figure}[!htb]
\centering
\includegraphics[width=70mm]{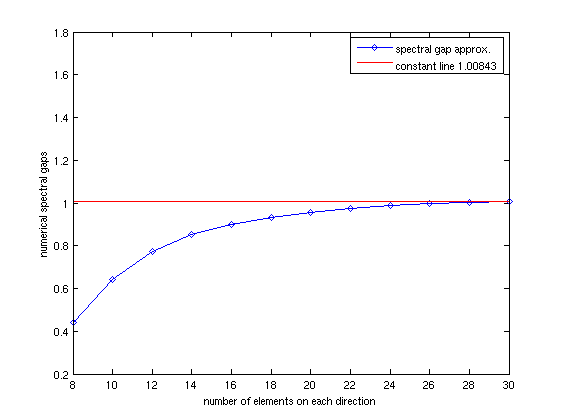}\\
\caption{The numerical spectral gaps with $\Omega_{v}=[-5,5]^{3}$ for 3d $\gamma=0, \alpha=0$}\label{fig:sg3d_V5_gamma0alpha0}
\end{figure}

We list the results for 3d variable hard potentials with isotropic angular cross-sections, see Table \ref{table:sg3d_vhp}.
\begin{table}[!ht]
\centering
\begin{tabular}{|c| c c c c c|}
  \hline
   $\gamma$ & 0 & 0.25 &0.5 & 0.75 & 1 \\
   \hline
  gaps & 0.33& 0.45& 0.62 &0.83 &1.10 \\
  \hline
\end{tabular}
\caption{Numerical spectral gaps for 3d variable hard potentials with isotropic angular cross-sections}
\label{table:sg3d_vhp}
\end{table}
from which we also can tell, as in 2d case, stronger intramolecular potential imposes faster decay to equilibrium.

\section{Summary}\label{sec:sum}

The existence as well as the quantitative information on the spectral gaps are very important for the justification of the Boltzmann model and study on the relaxation to equilibrium. This work is the first numerical verification, not only answering the existence of spectral gaps, but also provide numerical approximations to the real spectral gaps, if exist.

In this work, we have studied the geometry of spectral gaps for the linearized Boltzmann operators. For the integrable angular cross-sections, the Grad's splitting is recovered and used to build special approximation formulations. The Dirichlet form for the linearized operator is projected onto a Discontinuous Galerkin mesh, which results in an approximate Rayleigh quotient and can be proved to converge to the real spectral gaps. During the DG formulation, especially for the non-integrable angular cross-sections, a rotation transform has been applied to cancel the singularity in the angular cross-sections. The intersecting between $d-1$ dimensional sphere and the underlying DG mesh grids is also carefully analyzed, to guarantee accurate angular integrals over the sphere. The conservation routine is also applied to make a correction to the ``collision matrix". The conservation correction, equivalently, rules out the null space of the linearized Boltzmann operator.

A hybrid OpenMP and MPI paralleling computing is implemented to compute the eigenvalues of the conservative corrected ``collision matrix". Some routines in package like Scalapck \cite{Scalapack} have been called. Our test computations have been distributed among up to 256 nodes and 4000 cores on clusters Lonestar and Stampede affiliated with TACC \cite{TACC}. As long as memory and computing power allows, one can improve the accuracy of the numerical spectral gaps by choosing larger velocity domain, finer DG meshes and higher accuracy quadrature rules. This is also what we hope to do in future. With more efficient and accurate computing, one can explore more on the ``conjecture" and have a clear picture of the geometry of spectral gaps for different $\gamma$ and $\alpha$. And also, by considering the limit of Coulombic interactions ($\gamma=-3$, $\alpha=2$), one can answer the ``conjecture" \cite{MouhotStrain_SG} on spectral gaps for the linearized Landau operators.

\section*{Acknowledgement}

The authors thank Robert M. Strain for very valuable discussions that motivated this work back in 2013. I. Gamba and C. Zhang have been partially supported by NSF under grants DMS-1413064, DMS-1217154, NSF-RNMS 1107465 and the Moncreif Foundation. Support from the Institute of Computational Engineering and Sciences (ICES) at the University of Texas Austin is gratefully acknowledged.

\end{document}